\def\BibTeX{{\rm B\kern-.05em{\sc i\kern-.025em b}\kern-.08em
		T\kern-.1667em\lower.7ex\hbox{E}\kern-.125emX}}
\newtheorem{prop}{Proposition}
\newtheorem{corollary}{Corollary}
\theoremstyle{definition}
\newtheorem{proofpart}{Part}
\newtheorem{theorem}{Theorem}
\renewcommand{\@IEEEsectpunct}{\newline\ \,}% Modified from {:\ \,}
\title{Investment in EV charging spots for parking
}
\author{Brendan Badia, Randall Berry, Ermin Wei% <-this % stops a space
\IEEEcompsocitemizethanks{\IEEEcompsocthanksitem The authors are with the Department
of Electrical Engineering and Computer Science, Northwestern University.}
}
\begin{document}
	
	\maketitle
	%\thispagestyle{plain} %before this was empty for both
	%\pagestyle{plain}

	%%%%%%%%%%%%%%%%%%%%%%%%%%%%%%%%%%%%%%%%%%%%%%%%%%%%%%%%%%%%%%%%%%%%%%%%%%%%%%%%
	\begin{abstract}
	
	As demand for electric vehicles (EVs) is expanding, there is much interest in meeting the need for charging infrastructure, especially in urban areas. One method of adding charging stations is to install them at parking spots. While there are costs to constructing spots with chargers and they preclude regular internal combustion engine (ICE) drivers from using these spots, EV drivers may have a higher valuation for these spots due to their parking and charging demand. We look at two models for how decisions surrounding investment in charging stations on existing parking spots may be undertaken. First, we analyze two firms who compete over installing stations under government set mandates or subsidies. Given the cost of constructing spots and the competitiveness of markets, we find it is ambiguous whether setting higher mandates or higher subsidies for spot construction leads to better aggregate outcomes. Second, we look at a system operator who faces uncertainty on the size of the EV market. If they are risk neutral, we find a relatively small change in the uncertainty of the EV market can lead to large changes in the optimal charging capacity.
	
	\end{abstract}

	\begin{IEEEkeywords}
	Parking, electric vehicles, congestion, game theory
	\end{IEEEkeywords}

	%%%%%%%%%%%%%%%%%%%%%%%%%%%%%%%%%%%%%%%%%%%%%%%%%%%%%%%%%%%%%%%%%%%%%%%%%%%%%%%%
	\section{Introduction}
	
	Though sales of electric vehicles (EVs) are increasing rapidly, there are still barriers to their wide-spread adoption. A survey by AAA found that ``not enough places to charge" was cited by 63\% of consumers unwilling or unsure about purchasing an electric car, making it the most common reason cited \cite{newsroom}. While 80\% of charging currently occurs at home, in many areas (like densely populated urban areas) the availability of at home charging may not be feasible. It is thus important to study ways of making more charging infrastructure available to the general public. 
	
	Federal and state governments have been pursuing multiple strategies to increase charging infrastructure. One method of increasing availability of charging is by inducing private firms to install charging stations at parking spots. For example, the city of Seattle is working with two private firms to install EV charging stations throughout the city in order to induce more EV drivers \cite{seattlemag}. 
	
	Given that these decisions affect not only the EV market but also the market for ICE cars (who are excluded from using designated charging spots), the potential effects of such policies warrants investigation. As far as we know, this is one of the first attempts to explicitly do so.
	
	A variety of work has focused on optimizing EV charging, both in a competitive market and when system operator is looking at global welfare \cite{lee2015electric,mehar2013optimization,luo2017placement,alizadeh2016optimal,liu2013optimal}. In \cite{lee2015electric} the authors analyze the competition between charging stations (not necessary in a parking context) with renewable generation. The effect of EVs on the power grid and road system is studied in \cite{alizadeh2016optimal}, which offers ways system planners can optimize for this. Our work differs from \cite{lee2015electric} by analyzing how firm's investment in parking affects the market for EVs and ICEs and from \cite{alizadeh2016optimal} by focusing on parking rather than travel decisions of EV drivers along road networks.
	
	A key feature of parking is that it is a congestible resource, i.e., the more users that park in a given location, the less desirable it is (on average) for the next user who parks there. Commonly this effect is modeled with a congestion cost term that is increasing in the number of users parking in a given location \cite{arnott2006integrated}. We follow this approach here and adopt a model similar to those used in the literature on competition with congestible resources, e.g.  \cite{nguyen2016cost,acemoglu2007competition,gibbens2000internet,acemoglu2006price,kelly1998rate,fan2011distributed,johari2010congestible}. In particular \cite{nguyen2016cost} most closely resembles our model, as it looks at wireless service providers pricing unlicensed and licensed spectrum where consumers suffer congestion when other consumers use the same band. We will similarly look at a case where two classes of a market (parking for EV and ICE drivers) are priced and where consumers suffer congestion costs with other consumers of the same class, but differ as each class of service can only serve one type of consumer.
	
	There is a rich literature on examining the effect of prices and competition on parking decisions by consumers and the resulting welfare effects \cite{glazer2001parking,arnott1991temporal,qian2013optimal,shoup2006cruising,shoup1999trouble,ayala2012pricing,arnott1995modeling,lindsey2000traffic,arnott2006integrated}. The interaction of parking and road usage congestion is examined in \cite{glazer2001parking}, which shows a road-usage fee in addition to a parking fee is needed to maximize welfare. The work in \cite{arnott1991temporal} builds upon Vickrey's classic bottleneck road congestion model to examine how different pricing options can help maximize welfare by reducing or eliminating congestion. The effect of setting a minimum amount of parking for urban land use, and how this reduces welfare compared to allowing the market to decide the provision of spots is studied in \cite{shoup1999trouble}. However, none of this work looks at situations where both EV and ICE cars interact. 
	
	This paper focuses on the effect of government policy in two cases: in the first incumbent private owners of parking spaces are installing parking spots with charging stations and in the second a monopolist owns all spots but has uncertainty over the size of the EV market. In the first case we consider how two policies affect these firms: namely introducing subsidies for constructing EV charging spots and setting a minimum proportion of spots that should have chargers installed. 
	This portion is built upon our earlier work \cite{badia2018price}.
	We compare the electiveness of these policies under different assumptions on how firms compete and the relative market sizes of the EV and ICE markets. In the second case, we study the impact of stochasticity in the more valuable market (the EV market) on construction decisions by a monopolist.
	
	We assume a monopolist holds a mass of parking spots in an area with many ICE drivers. Due to the potential difficulties in forecasting the size of the future EV market, they have some uncertainty on how large the EV market may be. We will show both the monopolist's expectation and government policy can have a large impact on outcomes.
	
	The rest of the paper is organized as follows. Section \ref{sec:competitivemodel} defines the model for the competitive market; Section \ref{sec:sequentialspotallocation} analyzes the game as defined by this model; Section \ref{sec:competitiveresults} summarizes the results from this model; Section \ref{sec:competitivenumericalcasestudy} analyzes a specific case study; Section \ref{sec:monopolistmodel} defines the model for the monopolist; Section \ref{sec:monopolistpricing} examines the pricing structure in this model; Section \ref{sec:stochasticarrivalcapacity} examines the impact capacity decisions have on the optimal operating point; Section  \ref{sec:monopolistnumerical} studies a numerical case study for this model; Section \ref{sec:conclusion} concludes and looks at potential future work.
	
	\section{Competitive Market: Model} \label{sec:competitivemodel}
	
	\noindent In this section we present our model for firms competing over servicing EV and ICE drivers.  
	
	\subsection{Consumers}
	
	We assume there is a mass of two classes of consumers: drivers who use EVs and those who use ICE. Consumers of both classes are assumed to be non-atomic. There will be two types of parking spots: regular parking spots for ICE drivers and spots equipped for charging EVs. We assume each class of driver only parks in a spot designated for them. This is true in practice for ICE drivers as there are regulations against parking in spots with chargers, while we assume all EV drivers require charge for their cars and so would only consider parking in a spot with a charger equipped.
	
	As is standard in parking models we model the congestion as a function of the proportion of occupied spots \cite{arnott2006integrated}. This reflects that with higher occupancy each driver has to spend more time on average to find a vacant spot. We assume firms cluster spots of the same type together (i.e., those with charging and those without) and so consumers at each firm's parking location suffer congestion due to drivers of the same class but not due to other class of drivers, nor with any at the other firm. In general we want a congestion function for each marginal consumer that is increasing in the quantity of consumers already parked, decreasing in the number of spots, and convex in the volume-capacity ratio  \cite{arnott2006integrated}. For simplicity we will use a linear congestion function, which will be the same for both classes of consumers, i.e., 
	\begin{equation} \label{eq:congestion}
	S(q,N) = \epsilon \frac{q}{N}, 
	\end{equation} 
	\noindent where $q$ is the quantity of that class of consumers already parked, $N$ is the number of spots allocated for that class of driver, and $\epsilon > 0$ is a constant term that represents the congestion cost if all spots are filled. Note, we have not put any constraints on $q$ and thus allow $q > N$. In this case it would mean on average not only are all spots taken but there are consumers cruising around for spots.
	
	We do not model differences in the length of time a consumer may want to park; they simply want access to the ability to park (and will all pay the same price at a given firm to do so as discussed in the firm section below). One can think of this as a mass of drivers going to work, where they all work the same amount of time. Each consumer therefore purchases the same good, which is the ability to park during the work day.	
	
	We will assume for both classes that there is a linear downward inverse demand curve for parking determined by the valuation and size of each market, which we define for EV drivers ($P_e(q_e)$) and ICE drivers ($P_d(q_d)$) as
	\begin{equation} \label{eq:inversedemandEV}
	P_e(q_e) = W_e(1-\alpha q_e), 
	\end{equation}
	\begin{equation} \label{eq:inversedemandICE}
	P_d(q_d) =  W_d(1-\beta q_d). 
	\end{equation}
	The $y$-intercept of the ICE inverse demand curve is $W_d > 0$, and the slope is $\beta > 0$; while the $y$-intercept for the mass of EV drivers is $W_e > 0$, and the slope is $\alpha > 0$. Instead of modeling explicitly the demand EV drivers have for a certain amount of charge, for simplicity we assume that access to charging results in a fixed additional utility over parking. We capture this by assuming $W_e > W_d$, where $W_e$ captures both demand for the parking and the ability to charge at the parking spot. 
	
	\subsection{Firms}
	
	We assume there are two firms (i.e., parking garage owners) that already own a mass of parking spots suitable for ICE drivers. We normalize the number of total spots to be 1 and the number of spots firm $i$ has as $N_i$. Specifically, we define the proportion of spots that firm 1 owns as $N_1 = \delta$, where $0 \leq \delta \leq 1$. This means firm 2 owns $N_2 = 1 - \delta$ spots. Firms are unable to construct new spots, but either due to government policy or their own choice they are able to convert a portion of their spots to include an EV charger at cost $p$. We will call the number of spots designated for EVs at firm $i$ as $N_{ei}$ and the remaining spots as $N_{di}$.
	
	We assume the firms are close together so on average consumers have no preference for one firm over the other. The model could be extended to assume all consumers are attempting to get to a common location (for example a central business district) where one firm may be closer and thus preferred as in \cite{qian2013optimal}. 
	
	We define the value of parking for a marginal driver of each class at each firm $i$ when the firm sets a price $m_i$ ($c_i$) for ICE cars (EVs) and $q_{di}$ ($q_{ei}$) ICE (EV) drivers are parked at firm $i$ as follows: 
	\begin{multline} \label{eq:utilityEV}
	U_{ei}(q_{e1},q_{e2},N_{ei},c_i) = W_e(1-\alpha (q_{e1} + q_{e2})) \\
	- \epsilon \frac{q_{ei}}{N_{ei}} - c_i, 
	\end{multline}
	\begin{multline} \label{eq:utilityICE}
	U_{di}(q_{d1},q_{d2},N_{di},m_i) = W_d(1-\beta (q_{d1} + q_{d2})) \\
	- \epsilon \frac{q_{di}}{N_{di}} - m_i.
	\end{multline}
	Here, the first terms reflect the inverse demands in \eqref{eq:inversedemandEV} and \eqref{eq:inversedemandICE}. The quantity of each class of driver at firm $i$ is a function of the number of spots available for that driver and the price set for them both by firm $i$ and the opposing firm $j$. The number of consumers at each firm in equilibrium is defined using the notion of a Wardrop equilibrium \cite{wardrop1952some}. A Wardrop equilibrium here means that when drivers are parking at both firms, the following two conditions hold: the marginal utilities of the consumers at each garage have to be equated ($U_{ei} = U_{ej}$, $U_{di} = U_{dj}$) and the marginal utilities have be 0 ($U_{ei} = 0$, $U_{ej} = 0$, $U_{di} = 0$, $U_{dj} = 0$). This means that consumers park until the marginal user would gain negative utility from doing so, and sort themselves at the garage at which they get the highest utility. Let $N^{\epsilon}_{ei} =\frac{N_{ei}}{\epsilon}$, $N^{\epsilon}_{di} =\frac{N_{di}}{\epsilon}$, $\mathbf{N_e} = \{N_{e1},N_{e2}\}$,  $\mathbf{N_d} = \{N_{d1},N_{d2}\}$,  $\mathbf{m} = \{m_1,m_2\}$ and $\mathbf{c} = \{c_1,c_2\}$. Solving the system of equations given by the Wardrop Equilibrium, we get the following closed-form solution for the quantities serviced at firm $i$ given the actions of firm $i$ and the opposing firm $j$ when they both set a price such that $q_{ei} > 0 $ and $q_{ej} > 0$:
	\begin{equation} \label{eq:wardropEV}
	q_{ei}(\mathbf{N_e},\mathbf{c}) = \frac{W_e(1+\alpha N_{ej}^{\epsilon}(c_j-c_i) ) - c_i}{\alpha W_e(1+\frac{N_{ej}}{N_{ei}}) + \frac{1}{N_{ei}^{\epsilon}}},
	\end{equation}
	\begin{equation} \label{eq:wardropICE}
	q_{di}(\mathbf{N_d},\mathbf{m}) = \frac{W_d(1+\beta N_{dj}^{\epsilon}(m_j-m_i) ) - m_i}{\beta W_d(1+\frac{N_{dj}}{N_{di}}) + \frac{1}{N_{di}^{\epsilon}}}.
	\end{equation}
	These equations were derived by assuming  that there exist a $q_{ei} \geq 0$ and $q_{ej} \geq 0$ such that the marginal utilities at each firm are 0, which may not always be the case. In particular this will not be true in two scenarios. First, if one firm sets the number of spots for a given type of user to be $0$, as this leads to an infinite amount of congestion for that type of user. If this happens to one firm and not the other, the above solutions lead to one firm with non-zero capacity acting as a monopolist and the other firm servicing a quantity of $0$ for that class of driver as should occur. If they both set the number of spots to be $0$ then both quantities are 0 as is expected. The second scenario is where a firm sets too high a price so that the marginal utility of any driver parking there in equilibrium is negative. We show in Appendix 10.1, that in any unilateral deviation away from the above equilibrium, no firm would ever have such a price and so there is no problem with ignoring this possibility. We can therefore specify the general problem faced by each firm $i$:
	\begin{multline} \label{eq:firmoptimization}
	\underset{N_{ei},N_{di},m_i,c_i}{\text{max}} 
	\Pi_i = m_iq_{di}(\mathbf{N_d},\mathbf{m}) +   c_iq_{ei}(\mathbf{N_e},\mathbf{c}) - pN_{ei} \\
	\text{s.t.} \ \ \ \ \ \ 
	N_{ei} + N_{di} = N_i, \ N_{di} \geq 0, \ N_{ei} \geq 0. 
	\end{multline}
	Note that each firm seeks to solve such a problem, and that their decisions are coupled through the resulting Wardrop equilbrium quantities. Also note that we set no constraint for prices. This is because any negative price will result in negative profit, while a price that is too large (for example, $m_i > W_d$) will result in no profit. These choices are therefore dominated by setting the price to be $0$ and so there is an inferred constraint due to the nature of the optimization problem. By not explicitly stating this constraint we will be able to evaluate first-order conditions without having to check for corner solutions.
	
	\subsection{Government}
	
	We assume the government wants to induce more charging infrastructure to be constructed by firms. We will examine two methods of doing so: issuing a mandate and subsidizing construction. These are policies being currently explored and implemented. 
	
	For the mandate, we define the proportion of spots mandated for EV charging as $r$ and incorporate this into \eqref{eq:firmoptimization} by adding an additional constraint	
	\begin{equation} \label{eq:governmentmandate}
	N_{ei} \geq r N_i. 
	\end{equation}
	For the subsidy, we will allow the government to set a subsidy $s$ such that the cost of constructing EV spots for firms as a function of the intrinsic cost $t$ is
	\begin{equation} \label{eq:governmensubsidy}
	p = t - s. 
	\end{equation}
	
	\subsection{Consumer Surplus and Total Welfare}
	
	Given that consumers are non-atomic, then the consumer surplus for each class of consumer is given by integrating \eqref{eq:utilityEV} and \eqref{eq:utilityICE} over the quantity serviced at both firms, i.e., 
	\begin{multline} \label{eq:consumerwelfareEV}
	CS_{EV} =  \int_{0}^{q_{e1} + q_{e2}} W_e(1-\alpha q) dq \ - \\ \int_{0}^{q_{e1}}(\epsilon \frac{q}{N_{e1}} + c_1)dq - \int_{0}^{q_{e2}}(\epsilon \frac{q}{N_{e2}} + c_2)dq,
	\end{multline}	
	\begin{multline} \label{eq:consumerwelfareICE}
	CS_{ICE} =  \int_{0}^{q_{d1} + q_{d2}} W_d(1-\beta q) dq \ - \\ \int_{0}^{q_{d1}}(\epsilon \frac{q}{N_{d1}} + m_1)dq - \int_{0}^{q_{d2}}(\epsilon \frac{q}{N_{d2}} + m_2)dq.
	\end{multline}
	Note here we are assuming that drivers park sequentially so that early drivers incur a lower congestion cost than later drivers\footnote{Alternatively, we could assume that all drivers park in ``steady-state" in which case all drivers would see the same congestion cost.}. Each firm's welfare will simply be the total profit they receive given by the objective function in \eqref{eq:firmoptimization}. 
	
	The government has to pay for any subsidies they offer firms. We define the cost of subsidies as
	\begin{equation*}
	G(N_{e1},N_{e2},s) = s(N_{e1} + N_{e2}).
	\end{equation*}
	We therefore define the total welfare as the sum of consumer surpluses, the  firm profits, and the government's cost to provide subsidies, i.e.,
	\begin{equation} \label{eq:welfare}
	W_{total} = CS_{EV} + CS_{ICE} + \Pi_1 + \Pi_2 - G(N_{e1},N_{e2},s) .
	\end{equation}
	
	\section{Competitive Market: Sequential Spot Allocation and Pricing Game} \label{sec:sequentialspotallocation}
	
	We model the overall market interactions as a one shot, two-stage game. In the first stage, the firms simultaneously set the number of parking spots they will convert to have chargers. In the second stage they set prices having committed to the number of installations they made in the first stage. We will solve this game using backward induction, so we first examine the second stage (where parking spot quantities have already been set).
	
	\subsection{Second stage: Price competition}
	
	We will consider three different ways in which firms may set prices given fixed quantities of parking spaces.  These will differ in the degree to which firms price discriminate and anticipate the impact of EV parking. We begin with a ``two pricing case" in which firms discriminate in the prices that they charge  EV and ICE drivers accounting fully for the different preferences of these two classes. However, around 50\% of public chargers are free and thus it seems many private owners of charging stations do not specifically set a price to use that service \cite{plugincarschargenetworks}. We analyze two different scenarios where this might occur. The first is where they set a single price for both classes of drivers, but optimize that price understanding EV drivers' valuations. We call this the ``optimal single price case". In the second case, we assume that the garages had in the past (before constructing EV spots) set prices to compete over only ICE drivers and then maintain this price even after charging stations are added to parking spots. We call this the ``Naive" single price case. Each of these pricing schemes could also arise in part due to regulation that limits the prices that can be charged. We look at each case separately. A summary of the cases is shown in Table \ref{tab:firmpricedecisions} below.  \\
	\begin{table}[h!] 
		\begin{center}
			\begin{tabular}{ |c|c| } 
				\hline
				Case & Constraint on Prices ($c_i$, $m_i$)  \\ \hline
				Two Price & None \\  \hline
				Optimal single price & $c_i = m_i$  \\  \hline
				``Naive" single price & set assuming $N_{e1} = N_{e2} =0$.  \\ \hline
			\end{tabular} 
			\caption{Firm Price Decisions} \label{tab:firmpricedecisions}
		\end{center}
	\end{table}
	\subsubsection{Two price case}
	
	In this case we assume that the firms each set two separate prices: one for EV spots ($c$) and one for the ICE drivers ($m$). The price for EVs can be thought of as the plug-in price on top of the price of parking (so $c = m + e$, where $e$ is this plug-in price). In principle $t$ could vary with the amount of electricity an EV consumes while parked (and could also vary with the underlying cost of electricity in a real-time market). Here, we do not model such considerations and simply view each EV as incurring the same additional cost $t$ (or equivalently we could view this as the average cost per EV). We also do not explicitly model the length of stay for each driver, so all drivers of the same class pay the same cost at each firm's location. Each firm in this stage seeks to choose prices to maximize revenue. Note that in this stage the cost of building parking spaces is sunk and so we can ignore this when optimizing revenue. Also, once the spaces are allocated the pricing decision for the two classes de-couple and thus a firm's optimal prices in response to the price of its opponent are given by: 
	\begin{equation*}
	c_i^{BR} = \underset{c_i}{\textrm{argmax}} \ c_iq_{ei}(N_{e1},N_{e2},c_1,c_2),
	\end{equation*}
	\begin{equation*}
	m_i^{BR} = \underset{m_i}{\textrm{argmax}} \ m_iq_{di}(N_{d1},N_{d2},m_1,m_2).
	\end{equation*}
	It can be shown that the objective in these optimization problems is concave in the prices and so the solution can be found by evaluating the first order optimality conditions which yields: 
	\begin{equation*}
	c_i^{BR}(N_{ej},c_j) = \frac{W_e(1+\alpha N_{ej}^{\epsilon}c_j)}{2(\alpha W_eN_{ej}^{\epsilon} + 1)},
	\end{equation*}
	\begin{equation*}
	m_i^{BR}(N_{dj},m_j) = \frac{W_d(1+\beta N_{dj}^{\epsilon}m_j)}{2(\beta W_dN_{dj}^{\epsilon} + 1)}. 
	\end{equation*}
	Hence, by solving for the fixed point of the best responses we get the following expressions for the unique equilibrium prices in the second stage:  
	\begin{align} \label{eq:optimalEVpricetwoprice}
	c_i^{*}(\mathbf{N_e}) = \frac{2\alpha W_e^2 N_{ei}^{\epsilon} + \alpha W_e^2 N_{ej}^{\epsilon} + 2W_e}{3\alpha^2W_e^2 N_{ei}^{\epsilon} N_{ej}^{\epsilon} + 4\alpha W_e(N_{ei}^{\epsilon} + N_{ej}^{\epsilon}) + 4},
	\end{align}
	\begin{equation} \label{eq:optimalICEpricetwoprice}
	m_i^{*}(\mathbf{N_d}) = \frac{2\beta W_d^2 N_{di}^{\epsilon} + \beta W_d^2 N_{dj}^{\epsilon} + 2W_d}{3\beta^2W_d^2 N_{ei}^{\epsilon} N_{dj}^{\epsilon} + 4\beta W_d(N_{di}^{\epsilon} + N_{dj}^{\epsilon}) + 4}. 
	\end{equation} 
	
	\subsubsection{Optimal single pricing case} 
	
	In this case we assume that each firm only sets one price that both EV drivers and ICE drivers pay to park (i.e., the plug-in price is always 0).  Each firm in this stage seeks to choose a single price $m$ to maximize revenue. Note again that in this stage the cost of building parking spaces is sunk and so we can ignore this when optimizing revenue. Thus a firm's optimal price in response to the price of its opponent are given by:
	\begin{equation*}
	m_i^{BR} =  \underset{m_i}{\textrm{argmax}} \ m_i[q_{ei}(\mathbf{N_e},\mathbf{m}) + q_{di}(\mathbf{N_d},\mathbf{m})]. 
	\end{equation*}
	\noindent Again this objective can  be shown to be concave in price, and therefore we can find the solution by examining the first order optimality conditions. Define 
	\begin{multline*}
	g^m_i(\mathbf{N_e},\mathbf{N_d},m_j)= W_e \left[\beta W_d ( 1 + \frac{N_{dj}}{N_{di}}) + \frac{1}{N_{di}^{\epsilon}} \right] ( 1 + \\ 
	\alpha N_{ej}^{\epsilon} m_j ) +  W_d\left[ \alpha W_e (1 + \frac{N_{ej}}{N_{ei}}) + \frac{1}{N_{ei}^{\epsilon}} \right] (1 +\beta N_{dj}^{\epsilon} m_j),
	\end{multline*}
	\begin{multline*}
	h^m_i(\mathbf{N_e},\mathbf{N_d},m_j) = (\alpha W_e N_{ej}^{\epsilon} + 1)(\beta W_d \left(1 + \right. \\
	\left. \frac{N_{dj}^{\epsilon}}{N_{di}^{\epsilon}} \right)  
	+ \frac{1}{N_{di}^{\epsilon}}) + (\beta W_d N_{dj}^{\epsilon} + 1)(\alpha W_e \left(1 + \frac{N_{ej}}{N_{ei}} \right) + \frac{1}{N_{ei}^{\epsilon}}). 
	\end{multline*}
	Using these, the best response price for each firm is:
	\begin{equation} \label{eq:priceoptimalsingleprice}
	m_i^{BR}(\mathbf{N_e},\mathbf{N_d},m_j) = \frac{g_i^m(\mathbf{N_e},\mathbf{N_d},m_j)}{2h_i^m(\mathbf{N_e},\mathbf{N_d},m_j)}.
	\end{equation} 
	As before we can solve for the fixed point of these best responses to get an expression for the equilibrium prices as a function of the number of parking spots for each class of driver ($m_i^{*}(\mathbf{N_e},\mathbf{N_d})$). The expression contains many terms and so is excluded from this write-up for clarity. 
	
	\subsubsection{``Naive" single pricing case} 
	
	In this case we also assume firms charge one price for parking. However firms are ``naive" about the entry of EVs into the market and so maintain the parking prices they had set before they had constructed EV chargers. Thus the best response function for the single price they charge is 
	\begin{equation*}
	m_i^{BR} = \underset{m_i}{\textrm{argmax}} \ m_iq_{di}(N_{d1} = \delta,N_{d2} = 1- \delta,m_1,m_2).
	\end{equation*}
	Note in this case the price is independent of any decision made in the first stage of the game as the objective function assumes $N_{d1} = \delta$, $N_{d2} = 1- \delta$. Again, we can solve this using the first-order optimality conditions to derive the best responses and then solve for the fixed point. Defining $N_i^{\epsilon} = \frac{N_i}{\epsilon}$, this yields 
	\begin{equation} \label{eq:pricenaivesingleprice}
	m_i^{*} = \frac{2\beta W_d^2 N_i^{\epsilon} + \beta W_d^2 N_j^{\epsilon} + 2W_d}{3\beta^2W_d^2 N_i^{\epsilon} N_j^{\epsilon} + 4\beta W_dN_i^{\epsilon} + 4\beta W_dN_j^{\epsilon} + 4}.
	\end{equation}
	
	\subsection{First stage: Capacity competition}
	
	Next, we turn to the first stage in which firms decide on how many EV spaces to construct accounting for the resulting price equilibrium in the second stage. We will look at two scenarios of how firms may respond to the government mandate in this stage. In the pricing stage, we assumed in some cases firms did not fully internalize their ability to charge EV drivers a different price than ICE drivers. We will do something similar in this stage. In one case we assume firms ``naively" follow the mandate while in the other they will more intelligently choose the number of spaces to optimize their objective function  in \eqref{eq:firmoptimization}. In the first scenario, the firms simply meet the mandate (i.e., adding $N_{ei} = rN_i$ as a constraint in \eqref{eq:firmoptimization}). In the second scenario, both firms optimize over the number of EV spots taking the mandate as a lower bound (adding \eqref{eq:governmentmandate} as a constraint in \eqref{eq:firmoptimization}). Table \ref{tab:firmcapacitydecisions} summarizes these two scenarios.
	
	\begin{table}[h!]
		\begin{center}
			\begin{tabular}{ |c|c| } 
				\hline
				Case & Capacity constraint  \\ \hline
				``Naive" mandate fulfillment & $N_{ei} = rN_i$  \\ \hline
				Optimal Capacity & \eqref{eq:governmentmandate} \\ \hline
			\end{tabular}
			\caption{Firm Capacity Decisions} \label{tab:firmcapacitydecisions}
		\end{center}
	\end{table}
	
	\begin{figure*}[t!]
		\centering	
		
		\begin{minipage}[t]{4cm}
			\begin{subfigure}[t]{1\textwidth}
				\centering	
				\includegraphics[scale=0.35]{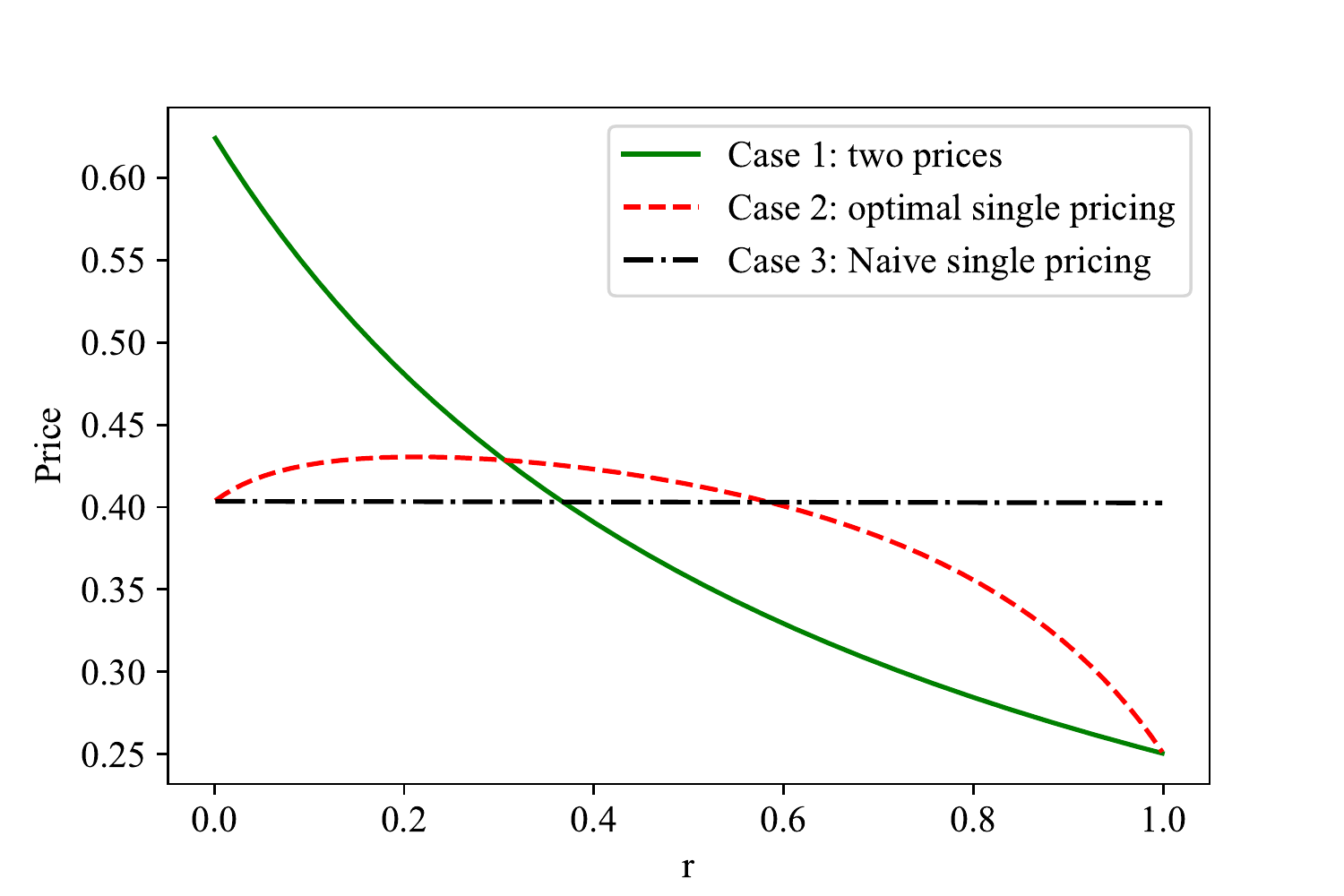} 
				\caption{Average EV price vs r} 
			\end{subfigure}
		\end{minipage}
		\hspace{1.1cm}
		\begin{minipage}[t]{4cm}
			\begin{subfigure}[t]{1\textwidth}
				\centering	
				\includegraphics[scale=0.35]{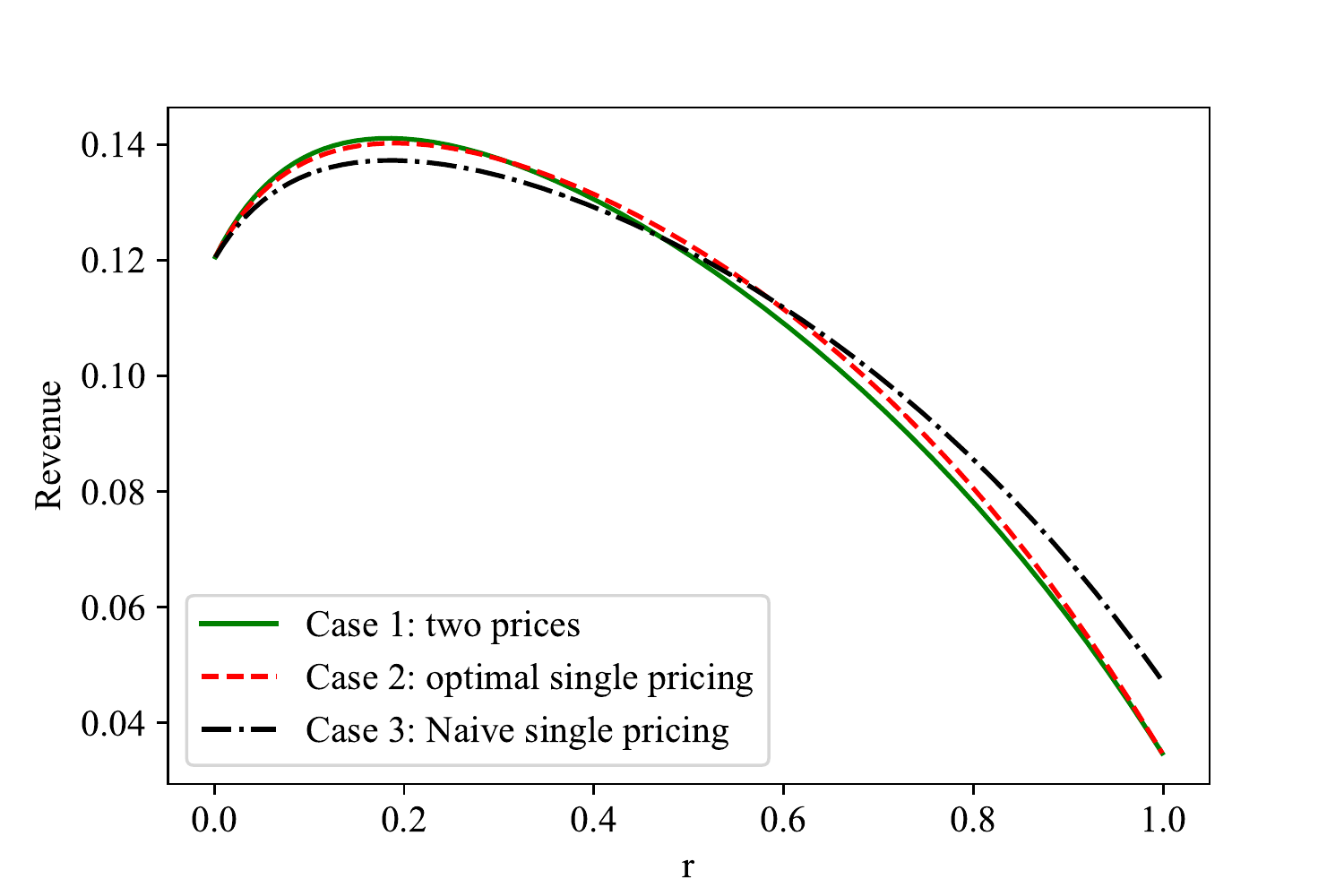} 
				\caption{Total Firm revenue vs r} 
			\end{subfigure}
		\end{minipage}
		\hspace{1.1cm}
		\begin{minipage}[t]{4cm}
			\begin{subfigure}[t]{1\textwidth}
				\centering	
				\includegraphics[scale=0.35]{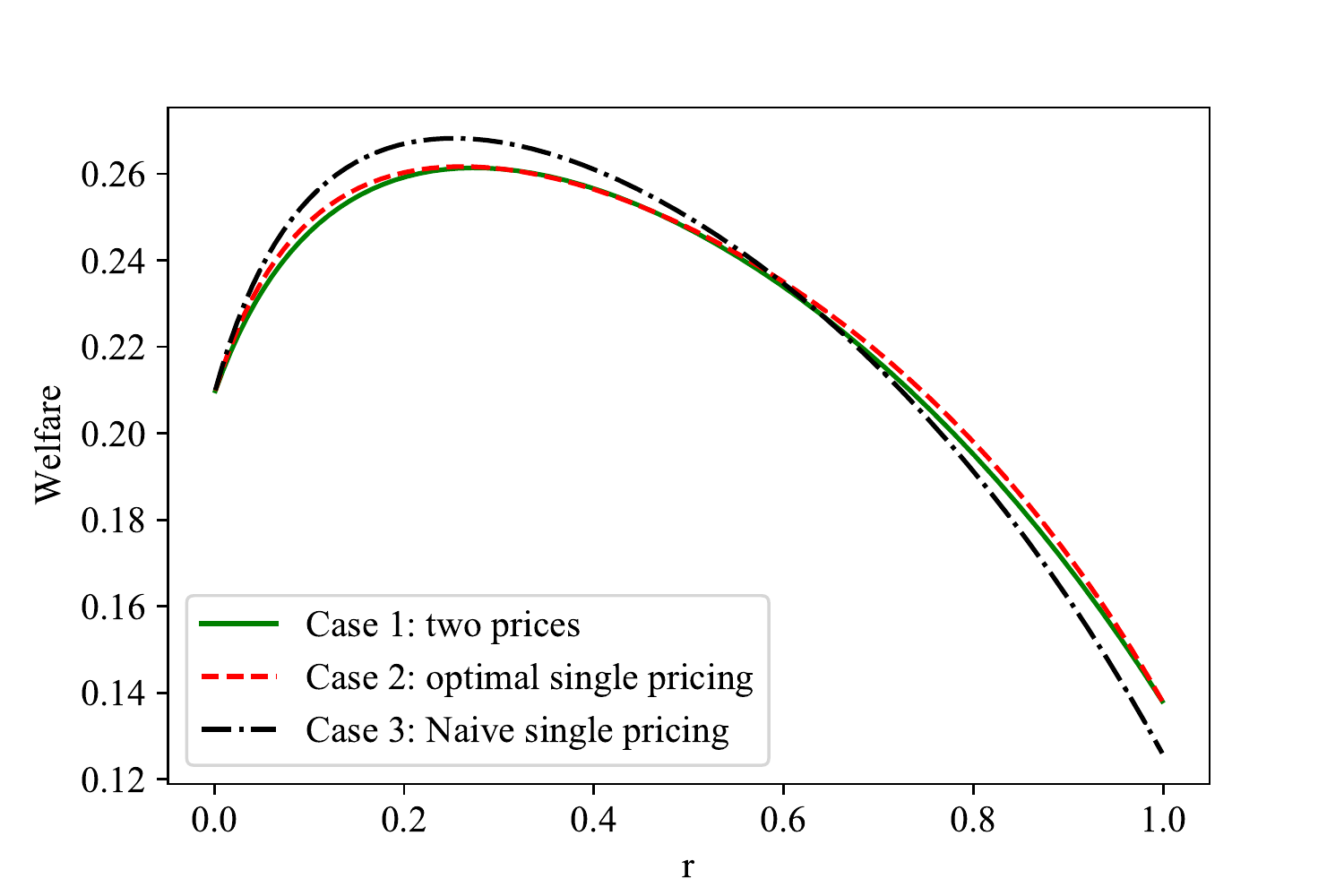} 
				\caption{Total Welfare vs r} 
			\end{subfigure}
		\end{minipage}
		
		\begin{minipage}[t]{4cm}
			\begin{subfigure}[t]{1\textwidth}
				\centering	
				\includegraphics[scale=0.35]{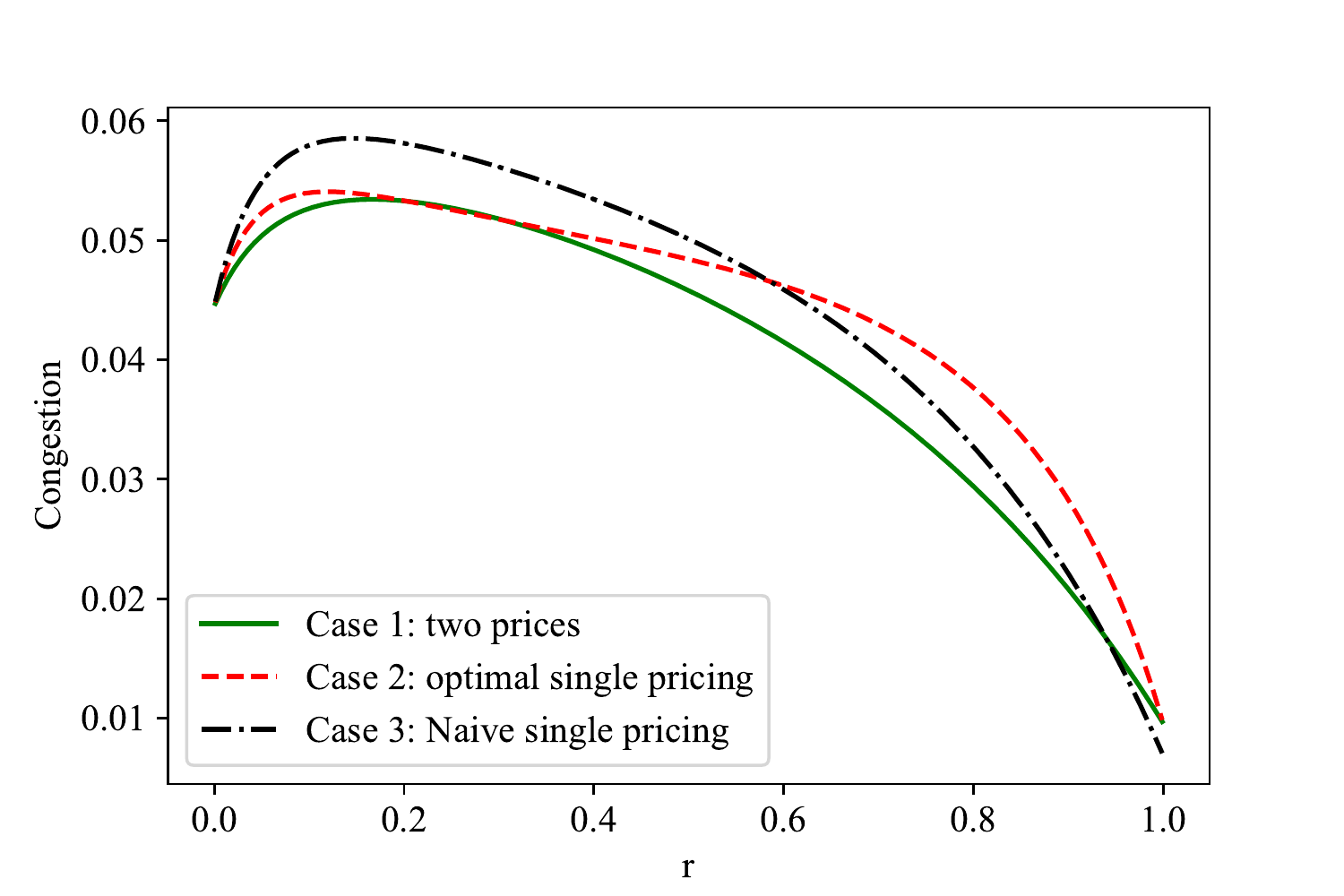} 
				\caption{Total Congestion vs r} 
			\end{subfigure}
		\end{minipage}
		\hspace{1.1cm}
		\begin{minipage}[t]{4cm}
			\begin{subfigure}[t]{1\textwidth}
				\centering	
				\includegraphics[scale=0.35]{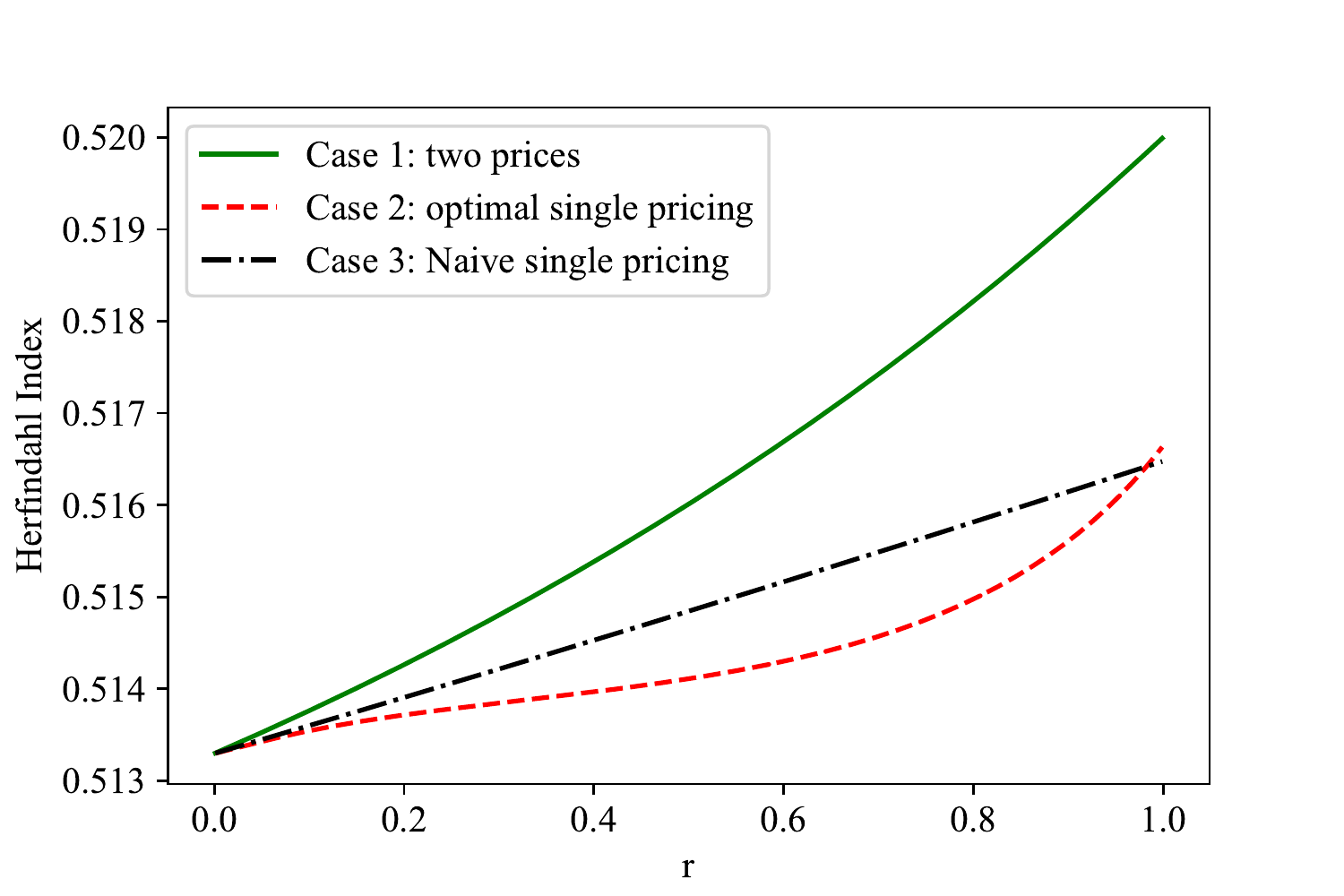}
				\caption{Herfindahl index ICE market vs r} 
			\end{subfigure}
		\end{minipage}
		\hspace{1.1cm}
		\begin{minipage}[t]{4cm}
			\begin{subfigure}[t]{1\textwidth}
				\centering	
				\includegraphics[scale=0.35]{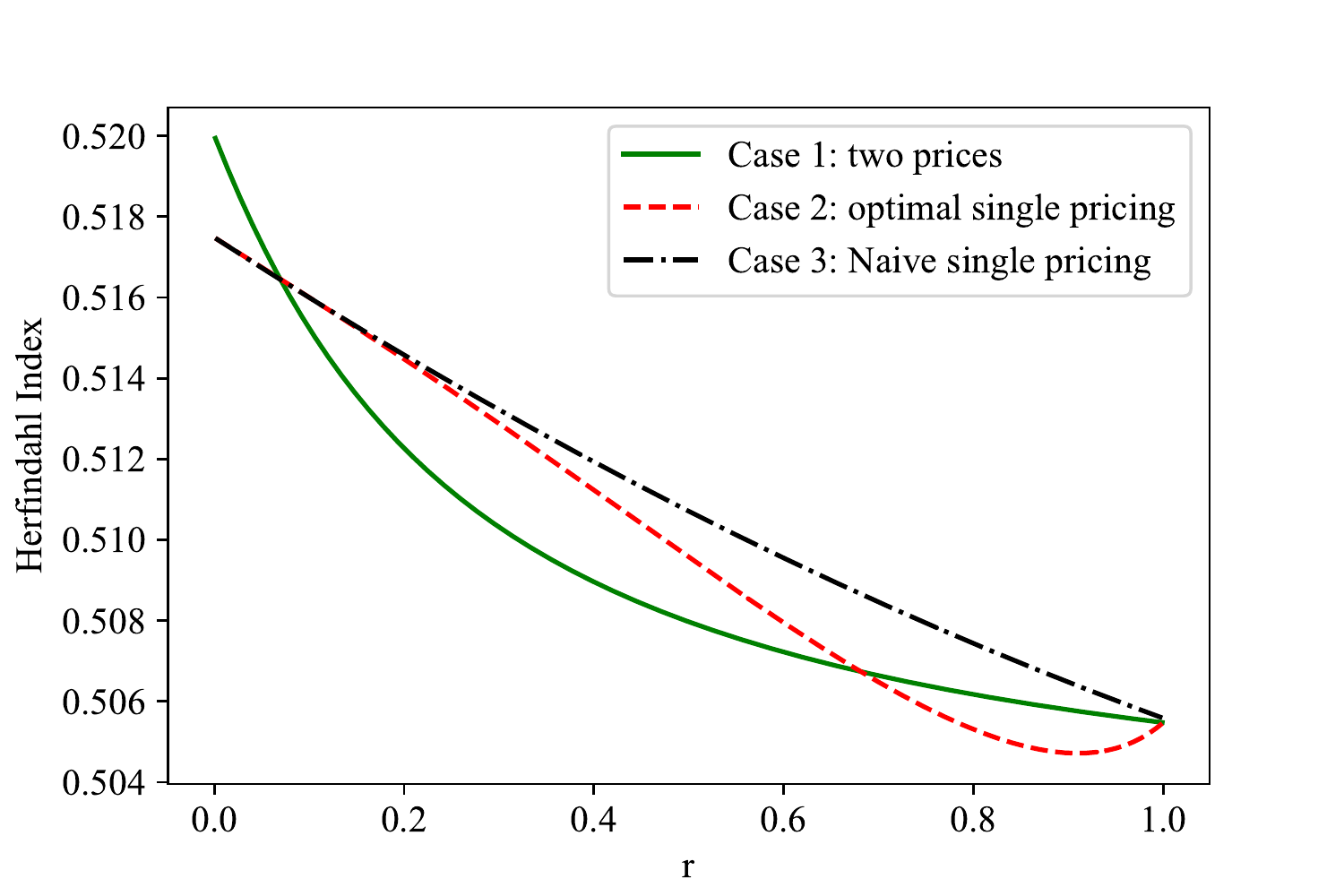} 
				\caption{Herfindahl index EV market vs r} 
			\end{subfigure}
		\end{minipage}
		
		\caption{Plots for $\alpha = 5$}
		
	\end{figure*}
	
	\section{Competitive Market: Results} \label{sec:competitiveresults}
	
	We now summarize the main results from analyzing the overall equilibrium under the different scenarios. \\
	
	\begin{prop} \label{prop:equilibriumexistence}
		For the ``naive" mandate fulfillment case, a sub-game perfect Nash equilibrium exists and is unique for all pricing schemes. In the optimal capacity case, a sub-game perfect Nash equilibrium exists if $\alpha W_e \leq 1$ , $\beta W_d \leq 1$ and $\epsilon = 1$ for the two price case and for any values in the ``naive" single price case. 
	\end{prop}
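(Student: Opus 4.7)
The plan is to use backward induction: first establish existence and uniqueness of the second-stage pricing equilibrium under each of the three pricing schemes, then analyze the first-stage capacity game using the continuation payoff obtained by substituting in the equilibrium prices. For the second stage, I would argue in each pricing scheme that a firm's revenue is a concave quadratic in its own price. In the two-price case this is transparent because \eqref{eq:wardropEV}--\eqref{eq:wardropICE} exhibit the Wardrop quantities as affine decreasing functions of the own price, so $c_i q_{ei}$ and $m_i q_{di}$ decouple into two independent concave maximizations; in the optimal single-price case the sum $m_i[q_{ei}+q_{di}]$ is again a concave quadratic in $m_i$, as noted before \eqref{eq:priceoptimalsingleprice}; and in the naive case firms price as if the EV market were absent, reducing to the ICE subproblem. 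The first-order conditions yield linear best responses with unique fixed points given exactly by \eqref{eq:optimalEVpricetwoprice}--\eqref{eq:optimalICEpricetwoprice}, \eqref{eq:priceoptimalsingleprice}, and \eqref{eq:pricenaivesingleprice}, establishing a unique second-stage equilibrium in every case.

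For the ``naive'' mandate fulfillment case the first-stage capacity is pinned down by the constraint $N_{ei}=rN_i$, so the first-stage strategy set is a singleton. Combined with the unique second-stage equilibrium from the previous paragraph, this immediately yields a unique SPNE for all three pricing schemes, covering the first half of the proposition.

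For the optimal capacity case, I would substitute the second-stage equilibrium prices into \eqref{eq:firmoptimization} to obtain a reduced first-stage profit $\tilde\Pi_i(N_{ei};N_{ej})$. The strategy set $[0,N_i]$ is compact and convex, so by Debreu--Fan--Glicksberg it suffices to verify that $\tilde\Pi_i$ is continuous in $(N_{ei},N_{ej})$ and concave (or at least quasi-concave) in $N_{ei}$ for each $N_{ej}$. In the naive single-price sub-case this is tractable because $m_i^{*}$ from \eqref{eq:pricenaivesingleprice} is independent of the capacity choices; one then only needs to examine how $q_{ei}$ and $q_{di}$ from \eqref{eq:wardropEV}--\eqref{eq:wardropICE} vary with $N_{ei}$ (recalling $N_{di}=N_i-N_{ei}$). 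Each Wardrop quantity has a numerator that is constant in $N_{ei}$ and a denominator that is convex and decreasing in the relevant capacity, so the quantities themselves are concave in $N_{ei}$ over the feasible range; linearity of profit in these quantities then gives concavity of $\tilde\Pi_i$ unconditionally, yielding existence for any parameter values.

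The main obstacle is the two-price sub-case of the optimal capacity game, where the equilibrium prices \eqref{eq:optimalEVpricetwoprice}--\eqref{eq:optimalICEpricetwoprice} are themselves nontrivial rational functions of $(N_{e1},N_{e2},N_{d1},N_{d2})$, and concavity of $\tilde\Pi_i$ in $N_{ei}$ need not hold for arbitrary parameters. This is exactly what forces the hypotheses $\alpha W_e\leq 1$, $\beta W_d\leq 1$, and $\epsilon=1$: setting $\epsilon=1$ collapses $N_{ei}^{\epsilon}=N_{ei}$ and lowers the degree of the rational expressions, while the demand-slope bounds control the sign of the dominant cross-terms in $\partial^2\tilde\Pi_i/\partial N_{ei}^2$. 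My approach is to split $\tilde\Pi_i$ into its EV-revenue and ICE-revenue contributions, compute the second derivative of each with respect to $N_{ei}$ (using $\partial N_{di}/\partial N_{ei}=-1$), bound each by a manifestly non-positive expression via the stated inequalities, and sum to conclude concavity; existence then follows from Brouwer applied to the continuous, single-valued best-response map on $[0,N_1]\times[0,N_2]$. I expect the principal difficulty to lie not in any single step but in the bookkeeping, since the intermediate rational expressions are algebraically heavy and sign-sensitive, which is also why uniqueness is not claimed for these sub-cases in the statement.
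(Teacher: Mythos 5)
Your treatment of the second stage and of the ``naive'' mandate-fulfillment case matches the paper in substance: the paper likewise establishes existence by showing each firm's revenue is a concave quadratic in its own price (computing $\partial^2\Pi_i^{ICE}/\partial m_i^2\le 0$ explicitly) and then observes that the first-stage strategy set is a singleton under $N_{ei}=rN_i$. The one place you diverge is uniqueness of the pricing equilibrium: you assert that the linear best responses ``have unique fixed points,'' which is true here but not automatic --- it requires the product of the best-response slopes to differ from $1$ (each slope is bounded by $1/2$ in this model, so it holds, but you should say so). The paper instead bounds the off-diagonal Hessian entry $\partial^2\Pi_i/\partial m_i\partial m_j$ by the diagonal one and invokes dominance solvability (Moulin 1984) to conclude uniqueness; either route works, but yours needs that one-line slope check to be complete.

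The genuine gap is in the optimal-capacity, two-price sub-case, which is where the hypotheses $\alpha W_e\le 1$, $\beta W_d\le 1$, $\epsilon=1$ carry all the content of the proposition. You correctly identify the strategy --- substitute \eqref{eq:optimalEVpricetwoprice}--\eqref{eq:optimalICEpricetwoprice} into the first-stage profit and verify concavity or quasi-concavity in $N_{ei}$ under those hypotheses --- but you do not carry out the computation; you only announce that the bounds should ``control the sign of the dominant cross-terms.'' Since that verification is the entire point of the second half of the statement, the proposal as written does not prove it. (For what it is worth, the paper itself does not print this part either: the optimal-capacity proof is deferred to an external document, so there is no in-paper argument to compare yours against.) Two smaller points: with mere concavity the best-response map need not be single-valued, so you should invoke Kakutani (or Debreu--Fan--Glicksberg directly on the quasi-concave payoffs) rather than Brouwer; and your concavity argument for the naive single-price sub-case --- each Wardrop quantity being a constant numerator over a positive, decreasing, convex function of own capacity --- is correct, but it relies on the naive price being independent of the first-stage capacities, which is exactly the feature that fails in the two-price sub-case and forces the extra hypotheses there.
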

	
	This proposition shows that under certain scenarios a unique sub-game perfect Nash equilibrium exists. In general a closed form expression for this equilibrium appears to be difficult to obtain. In Section 5 we will instead numerically analyze properties of these equilibria. The uniqueness of the equilibria given in this theorem provides a guarantee that our results will not depend on the algorithm used to numerically find equilibria. The proof for this is provided in the appendix. 
	\begin{prop} \label{prop:maxquantitydriver}
		If the total number of spots across both firms for a class of driver is fixed, then the quantity of that class of driver serviced is maximized if when these spots are divided equally across the firms. 
	\end{prop}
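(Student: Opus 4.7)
The plan is to reduce the maximization of total served quantity to a one-dimensional monotonicity check. First, summing the Wardrop equilibrium quantities in \eqref{eq:wardropEV} over $i=1,2$, the cross terms proportional to $\alpha W_e N_{ej}^{\epsilon}(c_j-c_i)$ from the two summands are equal in magnitude and opposite in sign, so they cancel exactly. After placing the remaining terms over the common denominator $\alpha W_e(N_{e1}+N_{e2})+\epsilon$, one obtains
\[ q_{e1}+q_{e2} = \frac{W_e(N_{e1}+N_{e2}) - (N_{e1}c_1 + N_{e2}c_2)}{\alpha W_e(N_{e1}+N_{e2}) + \epsilon}, \]
valid for any prices at which both firms serve a positive EV mass; the analogous identity follows from \eqref{eq:wardropICE} for the ICE market. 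Since the total capacity $\bar N := N_{e1}+N_{e2}$ is held fixed by hypothesis, the denominator is constant, so maximizing the total quantity is equivalent to minimizing the capacity-weighted price $N_{e1}c_1 + N_{e2}c_2$.

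Next, I would substitute the second-stage equilibrium prices for each pricing scheme and show that the resulting expression is minimized at the symmetric split. In the two-price case, plugging \eqref{eq:optimalEVpricetwoprice} into the weighted price and writing $a_i := N_{ei}/\epsilon$, $T := \bar N/\epsilon$, and $A := \alpha W_e$ gives after a short simplification
\[ N_{e1}c_1^{*} + N_{e2}c_2^{*} = \epsilon W_e \cdot \frac{2AT^{2} + 2T - 2Ay}{3A^{2}y + 4AT + 4}, \]
where $y := a_1 a_2$. A single quotient-rule calculation shows this is strictly decreasing in $y$, so minimizing it reduces to maximizing $y$. By AM-GM, $y$ subject to $a_1+a_2=T$ is uniquely maximized at $a_1=a_2=T/2$, i.e.\ at $N_{e1}=N_{e2}=\bar N/2$, giving the claim. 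The ICE market is treated identically via \eqref{eq:optimalICEpricetwoprice}. For the naive single-price case, \eqref{eq:pricenaivesingleprice} shows the prices do not depend on EV capacities, so under the symmetric firm configuration $\delta = 1/2$ one has $m_1^{*}=m_2^{*}$ and the total served quantity is constant in the EV split, making equal split weakly optimal; the ICE maximization reduces directly to AM-GM on the denominator of the reduced total-quantity expression.

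The main obstacle lies in extending the monotonicity check to the optimal single-price case, where the equilibrium prices solving \eqref{eq:priceoptimalsingleprice} are defined as a fixed point of coupled best-response equations and have a less tractable closed form than \eqref{eq:optimalEVpricetwoprice}. A cleaner alternative, should direct substitution prove unwieldy, is a symmetry-plus-second-order argument: use invariance of the model under swapping firm labels to conclude that $\nabla (N_{e1}c_1^{*}+N_{e2}c_2^{*})$ vanishes at the symmetric point $N_{e1}=N_{e2}=\bar N/2$, then check the Hessian along the constraint direction $dN_{e1} = -dN_{e2}$ to confirm this critical point is a strict minimum rather than a saddle, completing the proof across all three pricing regimes.
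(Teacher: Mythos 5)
Your argument is correct for the two-price case, and it reaches the same underlying fact as the paper --- that the total served quantity is a monotone increasing function of the product $N_{e1}N_{e2}$, hence maximized at the equal split by AM--GM --- but it gets there by a genuinely different and cleaner decomposition. The paper substitutes the best-response prices directly into $q_{e1}+q_{e2}$, obtains total quantity as an explicit rational function of $KN_{e2}-N_{e2}^2$, and differentiates in $N_{e2}$ via the quotient rule. You instead first prove the identity
\begin{equation*}
q_{e1}+q_{e2}=\frac{W_e\bar N-(N_{e1}c_1+N_{e2}c_2)}{\alpha W_e\bar N+\epsilon},
\end{equation*}
which indeed follows from \eqref{eq:wardropEV} since the cross terms $\pm\alpha W_e\frac{N_{e1}N_{e2}}{\epsilon}(c_2-c_1)$ cancel (equivalently, multiply each Wardrop condition by $N_{ei}$ and sum), and then reduce the problem to minimizing the capacity-weighted price. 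Your subsequent computation from \eqref{eq:optimalEVpricetwoprice} is right: the weighted price is $\epsilon W_e(2AT^2+2T-2Ay)/(3A^2y+4AT+4)$, strictly decreasing in $y=a_1a_2$, so AM--GM finishes the two-price case, which is all the paper itself actually proves (it notes the identical ICE argument and stops there). What your route buys is modularity: the identity holds for arbitrary prices, so it isolates exactly where the pricing scheme enters, and it correctly exposes that in the ``naive'' single-price case the weighted price is \emph{linear} in the split, so the claim only holds weakly and only when $m_1^*=m_2^*$ (i.e.\ $\delta=1/2$) --- a scope limitation the paper's statement glosses over. One caution: your fallback for the optimal single-price case (symmetry forces a critical point at the equal split, then check the Hessian along $dN_{e1}=-dN_{e2}$) establishes only a \emph{local} maximum; without a global monotonicity or concavity argument along the constraint line it does not rule out an asymmetric global maximizer, so that case remains open in your write-up --- as it does in the paper.
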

	
	This makes intuitive sense. If there is a fixed number of spots for a given class of driver, the price will be lowest when firms have equal market power and thus compete the most. As a direct corollary  to this we have: 
	
	\begin{corollary} \label{cor:competitivewelfaremax}
		Consumer welfare is maximized when each firm has the same initial endowment ($\delta = 0.5$) and there is a unique equilibria. 
	\end{corollary}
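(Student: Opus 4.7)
The plan is to derive the corollary as an application of Proposition~\ref{prop:maxquantitydriver} by combining a symmetry argument (pinning down the structure of the $\delta = 0.5$ equilibrium) with a monotonicity claim for consumer surplus in the quantities served. The skeleton is: equal initial endowments force a symmetric sub-game perfect equilibrium; that symmetric equilibrium realises the equal-split allocation singled out by Proposition~\ref{prop:maxquantitydriver}; the equal split maximises $Q_e = q_{e1}+q_{e2}$ and $Q_d = q_{d1}+q_{d2}$; and consumer surplus, once the Wardrop conditions are substituted in, is monotone in these totals.

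First, I would note that when $\delta = 0.5$ the two firms face the same optimisation problem \eqref{eq:firmoptimization} in every stage and for every pricing scheme from Section~\ref{sec:sequentialspotallocation}. Combined with the uniqueness assertion of Proposition~\ref{prop:equilibriumexistence}, this forces the equilibrium to be fixed under firm-label swaps, so $N_{e1}^* = N_{e2}^*$, $N_{d1}^* = N_{d2}^*$, $c_1^* = c_2^*$ and $m_1^* = m_2^*$. Hence each class of spots is split evenly between firms at $\delta = 0.5$.

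Next, I would invoke Proposition~\ref{prop:maxquantitydriver} with the equilibrium class totals $N_e^{*} = N_{e1}^{*}+N_{e2}^{*}$ and $N_d^{*} = N_{d1}^{*}+N_{d2}^{*}$. For those totals, the even split is exactly the quantity-maximising division, so $Q_e$ and $Q_d$ attain the highest values available in the pricing subgame given the first-stage outputs. To convert this into a statement about $CS_{EV}+CS_{ICE}$, I would substitute the Wardrop conditions $W_e(1-\alpha Q_e) = \epsilon q_{ei}/N_{ei}+c_i$ (and its ICE analogue) into \eqref{eq:consumerwelfareEV}--\eqref{eq:consumerwelfareICE}. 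The price terms cancel with part of the congestion integral, leaving $CS_{EV}$ as a sum of $\tfrac{1}{2}W_e\alpha Q_e^2$ plus a term of the form $\tfrac{\epsilon}{2}(q_{e1}^2/N_{e1}+q_{e2}^2/N_{e2})$, and likewise for $CS_{ICE}$. At the symmetric point the Cauchy--Schwarz-type congestion term collapses to $\epsilon Q_e^2/(2N_e^{*})$, so $CS_{EV}$ becomes an explicit increasing function of $Q_e$ with $N_e^{*}$ fixed, and Proposition~\ref{prop:maxquantitydriver} delivers the maximisation.

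The main obstacle is that Proposition~\ref{prop:maxquantitydriver} compares allocations only for a fixed class total, while perturbing $\delta$ away from $0.5$ can also perturb the totals $N_e^{*}$ and $N_d^{*}$ chosen in the first stage. I would close this gap with an envelope-style argument at the symmetric point: the first-order conditions for $N_{ei}$ in \eqref{eq:firmoptimization} coincide for the two firms at $\delta = 0.5$, so small perturbations in $\delta$ alter the firms' optimal totals only through asymmetry of their best responses, and Proposition~\ref{prop:maxquantitydriver} already controls the consumer-relevant quantity effect of that asymmetry. Combined with the reduced-form $CS$ expressions above, this rules out any consumer-welfare-improving deviation from $\delta = 0.5$ and completes the proof.
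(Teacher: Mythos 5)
Your skeleton (symmetry of the unique equilibrium at $\delta=0.5$, then Proposition~\ref{prop:maxquantitydriver}) matches the paper's, but you miss the one concrete step the paper actually leans on and replace it with two steps that do not hold up. The paper's reason why $\delta\neq 0.5$ is worse is a feasibility count, not an envelope argument: since $N_{e1}+N_{d1}=\delta$ and $N_{e2}+N_{d2}=1-\delta$, having $N_{e1}=N_{e2}$ and $N_{d1}=N_{d2}$ simultaneously would force $\delta=1-\delta$; hence for $\delta\neq 0.5$ at most one class of spots can be evenly split across the firms, and Proposition~\ref{prop:maxquantitydriver} then says the other class's served quantity falls short of its equal-split maximum. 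Your ``envelope-style argument'' is only local (small perturbations of $\delta$), does not actually bound how the first-stage totals $N_e^{*}$ and $N_d^{*}$ move, and asserts rather than shows that Proposition~\ref{prop:maxquantitydriver} ``controls'' the resulting quantity effect; that proposition only compares splits of a \emph{fixed} total.

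Second, your monotonicity step from quantities to consumer surplus has a sign problem. Substituting the Wardrop condition into \eqref{eq:consumerwelfareEV} does give $CS_{EV}=\tfrac{1}{2}\alpha W_e Q_e^{2}+\tfrac{\epsilon}{2}\bigl(q_{e1}^{2}/N_{e1}+q_{e2}^{2}/N_{e2}\bigr)$, but by Cauchy--Schwarz the second term is \emph{minimized}, not maximized, when the occupancy ratios $q_{ei}/N_{ei}$ are equalized: at the symmetric allocation it collapses to its smallest possible value $\epsilon Q_e^{2}/(2N_e)$ for given $Q_e$ and $N_e$. So as you move from an asymmetric split to the symmetric one, the first term rises (larger $Q_e$) while the second term's ``heterogeneity bonus'' disappears, and ``$Q_e$ is larger'' does not by itself imply $CS_{EV}$ is larger; you would need to show the first effect dominates. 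The paper avoids this by treating served quantity as the proxy for consumer welfare and justifying the corollary only with the short informal paragraph that follows it; once you write the surplus out explicitly, as you do, you must confront this competing term rather than assert monotonicity in $Q_e$.
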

	
	This follows from Proposition \ref{prop:maxquantitydriver}. When $\delta = 0.5$, it can be shown in all cases where equilibria are unique that each firm chooses the same number of spots for EVs and ICE drivers. This cannot happen if $\delta \neq 0.5$ as in this case the same number of spots cannot be constructed for both classes of drivers: at best this can only occur for one class. Again this result is intuitive, as consumer welfare is maximized when no firm has market power over another. 
	
	\begin{prop} \label{prop:bestresponseslopde}
		The best response prices in either market for the two price case is decreasing as the slope of demand in that market increases (i.e. as $\alpha$ or $\beta$ increases). The best response price in the ``naive" single price case increases as $\beta$ increases.
	\end{prop}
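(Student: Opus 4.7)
My plan is to verify each monotonicity statement by direct partial differentiation of the closed-form equilibrium price expressions \eqref{eq:optimalEVpricetwoprice}, \eqref{eq:optimalICEpricetwoprice} and \eqref{eq:pricenaivesingleprice}. In each case the relevant slope parameter enters only through a single dimensionless composite, which I isolate by substitution: $u:=\alpha W_e$ for the EV two-price formula and $p:=\beta W_d$ for both the ICE two-price formula and the naive formula. Since these substitutions have strictly positive derivatives in the original parameter ($W_e,W_d>0$), the sign of $\partial/\partial\alpha$ (resp.\ $\partial/\partial\beta$) agrees with the sign of the derivative with respect to the composite variable, so each claim reduces to a one-variable monotonicity check on a rational function.

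For the two-price EV price, after substitution I would write $c_i^*/W_e=f(u)/g(u)$ with $f(u)=u(2x+y)+2$ and $g(u)=3u^2xy+4u(x+y)+4$, where $x:=N_{ei}^\epsilon$, $y:=N_{ej}^\epsilon$. Because $f$ is linear and $g$ is quadratic in $u$, the polynomial $f'g-fg'$ has degree at most two in $u$. A term-by-term expansion produces systematic cancellation: the cubic-in-$u$ piece of $f'g$ is half the corresponding piece in $fg'$, and the linear cross-terms of the form $4u(2x+y)(x+y)$ cancel exactly, leaving a residual of the form $-3u^2xy(2x+y)-12uxy-4y$, which is strictly negative. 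This yields the EV claim. The ICE claim in the two-price case then follows immediately by the fully symmetric substitution $\alpha\leftrightarrow\beta$, $W_e\leftrightarrow W_d$, $N_{ei}\leftrightarrow N_{di}$, $N_{ej}\leftrightarrow N_{dj}$, since \eqref{eq:optimalICEpricetwoprice} has exactly the same functional form as \eqref{eq:optimalEVpricetwoprice}.

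For the naive single-price case, I would apply the substitution $p:=\beta W_d$ to \eqref{eq:pricenaivesingleprice}. The crucial feature distinguishing this case is that here the total spot allocations $N_i=\delta$ and $N_j=1-\delta$ are exogenous and do not depend on $\beta$, so $N_i^\epsilon$ and $N_j^\epsilon$ are treated as $\beta$-independent constants when differentiating. The resulting rational function has the same one-variable form $\psi(p;N_i^\epsilon,N_j^\epsilon)$ as in the two-price case, so the key step is again to evaluate $f'g-fg'$ and to read off the sign of the numerator; the interpretation as the $\beta$-derivative of the equilibrium price then gives the claim.

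The principal obstacle throughout is purely algebraic bookkeeping: each of $f'g$ and $fg'$ expands into several monomials in $u$ (or $p$), $x$, and $y$, and the claimed signs depend on these pieces combining precisely. A useful guardrail is that since $f$ is linear and $g$ is quadratic in the composite variable, $f'g-fg'$ is a priori of degree at most two; matching coefficients of $u^2$, $u^1$, and $u^0$ separately reduces the verification to three short identities, each of which can be inspected independently, limiting the scope for arithmetic error.
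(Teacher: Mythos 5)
Your method is the same as the paper's: differentiate the closed-form fixed-point prices \eqref{eq:optimalEVpricetwoprice} and \eqref{eq:optimalICEpricetwoprice} directly and read off the sign. Your algebra for the two-price case checks out: with $u=\alpha W_e$, $x=N_{ei}^{\epsilon}$, $y=N_{ej}^{\epsilon}$, $f(u)=u(2x+y)+2$, $g(u)=3u^2xy+4u(x+y)+4$, one indeed gets $f'g-fg'=-3u^2xy(2x+y)-12uxy-4y<0$ (your residual is in fact more accurate than the derivative displayed in the paper, which drops the $-4y$ term; and the leading piece is quadratic in $u$, not ``cubic,'' but that is only a terminology slip). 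The symmetry argument for the ICE price and the observation that \eqref{eq:pricenaivesingleprice} is the same rational function with $N_{di},N_{dj}$ replaced by the fixed endowments $N_i,N_j$ are exactly what the paper does.

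The one genuine problem is your last step for the naive case: you assert that reading off the sign of $f'g-fg'$ ``then gives the claim,'' but the claim in the proposition is that the naive single price \emph{increases} in $\beta$, whereas your computation (identical functional form, hence identical negative sign of $f'g-fg'$) establishes that it \emph{decreases} in $\beta$. You cannot get the stated monotonicity from this argument; either the proposition's second sentence is in error or an entirely different argument is needed, and your write-up should say which rather than declare victory. Note the paper's own proof has the same tension --- it merely remarks that the naive price function coincides with the two-price ICE formula under the substitution $N_{di}\to N_i$, $N_{dj}\to N_j$, which likewise yields a decreasing price --- so you have faithfully reproduced the paper's argument, but as a proof of the statement as literally written, the second half does not go through. (Separately, avoid using $p$ for the composite $\beta W_d$, since $p$ already denotes the construction cost in this paper.)
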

	
	\begin{proof}
		We prove this by showing that in the two price case the derivative of best response price for EV drivers is negative with respect to $\alpha$, $\forall \alpha \geq 0 $ and the derivative of best response price for ICE drivers is negative with respect to $\beta$, $\forall \beta \geq 0 $.
		\begin{multline*}
		\frac{\delta c^{*}_i}{\delta \alpha}(\mathbf{Ne}) = -(\frac{3\alpha W_e^3 Ne_i Ne_j}{\epsilon^2}) \times \\
		\frac{(\frac{2\alpha W_eN_{ei}}{\epsilon} +\frac{\beta W_eN_{ej}}{\epsilon} + 4)}{(3\alpha^2W_e^2 N_{ei}^{\epsilon} N_{ej}^{\epsilon} + 4\alpha W_e(N_{ei}^{\epsilon} + N_{ej}^{\epsilon}) + 4)^2} \leq 0, 
		\end{multline*}
		\begin{multline*}
		\frac{\delta m^{*}_i}{\delta \beta}(\mathbf{Nd}) = -(\frac{3\beta W_d^3 Nd_i Nd_j}{\epsilon^2}) \times \\
		\frac{(\frac{2\beta W_dN_{di}}{\epsilon} +\frac{\beta W_dN_{dj}}{\epsilon} + 4)}{(3\beta^2W_d^2 N_{di}^{\epsilon} N_{dj}^{\epsilon} + 4\beta W_d(N_{di}^{\epsilon} + N_{dj}^{\epsilon}) + 4)^2} \leq 0.
		\end{multline*}
		
		The ``Naive" pricing case's best response price function is equivalent to the ICE best price function of the two price case but with quantities of firm $i$ fixed as $N_i$ and firm $j$ fixed as $N_j$.
	\end{proof}
	This means that the smaller the market, the more each firm has an incentive to compete by lowering price. In the most extreme case where $\alpha = 0$ or $\beta = 0$, the firms do not compete at all (as there are an infinite number of consumers with the same valuation) and set the same price as a monopoly servicing that market. 
	
	\begin{figure*}[t!]
		\centering	
		\begin{minipage}[t]{4cm}
			\begin{subfigure}[t]{1\textwidth}
				\centering	
				\includegraphics[scale=0.35]{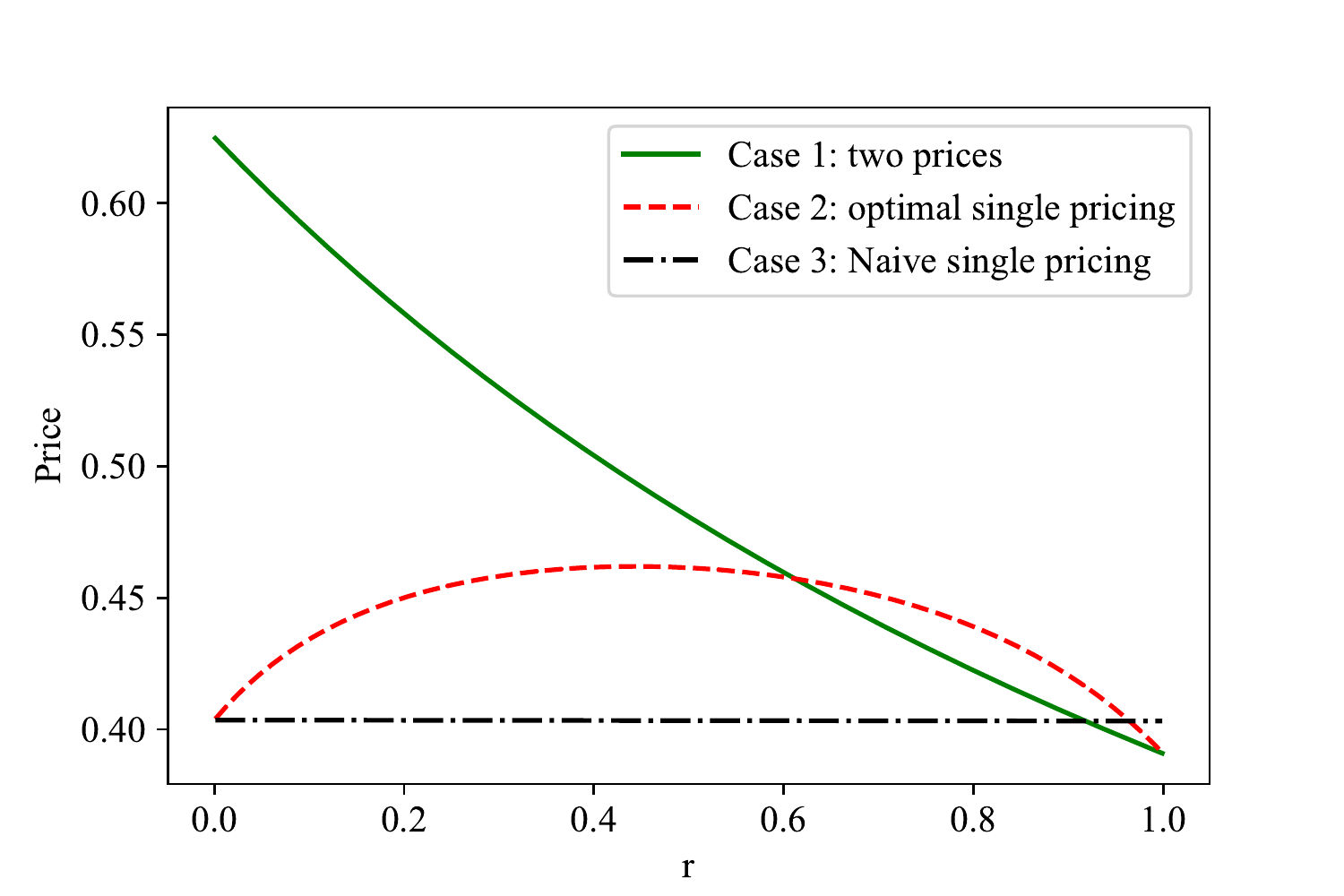} 
				\caption{Average EV price vs r} 
				\label{fig:evPrice5}
			\end{subfigure}
		\end{minipage}
		\hspace{1.1cm}
		\begin{minipage}[t]{4cm}
			\begin{subfigure}[t]{1\textwidth}
				\centering	
				\includegraphics[scale=0.35]{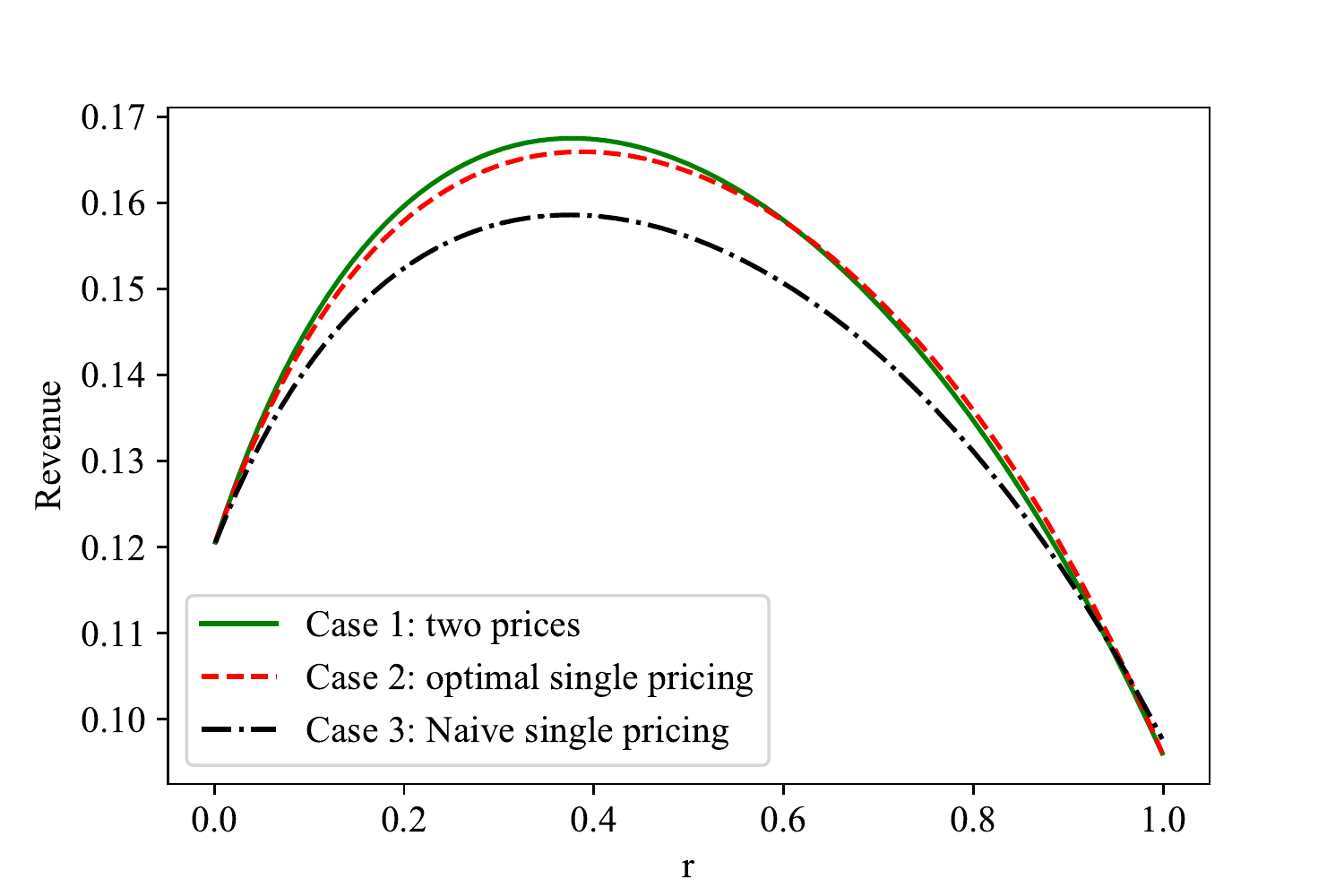} 
				\caption{Total Firm revenue vs r} 
			\end{subfigure}
		\end{minipage}
		\hspace{1.1cm}
		\begin{minipage}[t]{4cm}
			\begin{subfigure}[t]{1\textwidth}
				\centering	
				\includegraphics[scale=0.35]{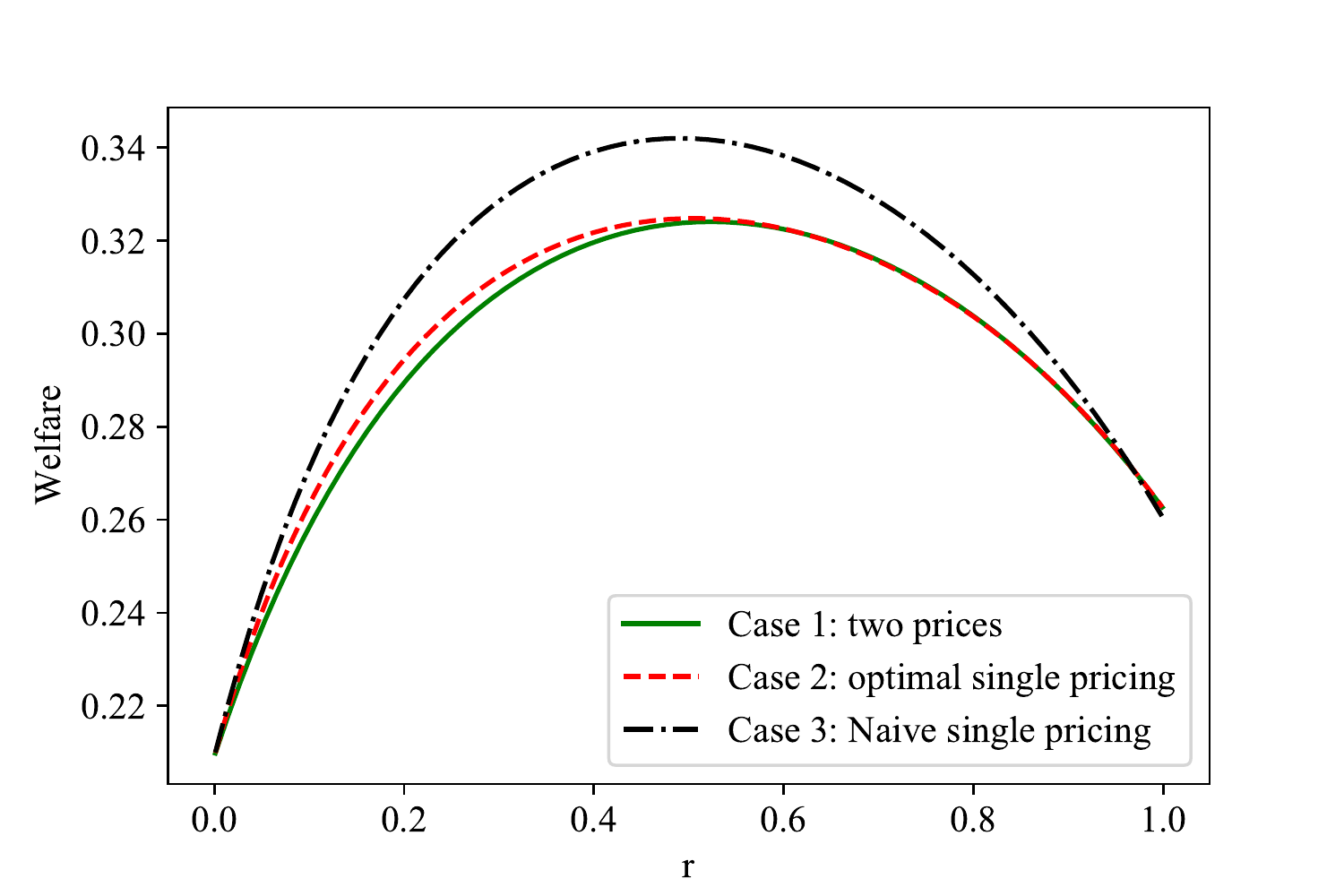} 
				\caption{Total Welfare vs r} 
			\end{subfigure}
		\end{minipage}
		
		\begin{minipage}[t]{4cm}
			\begin{subfigure}[t]{1\textwidth}
				\centering	
				\includegraphics[scale=0.35]{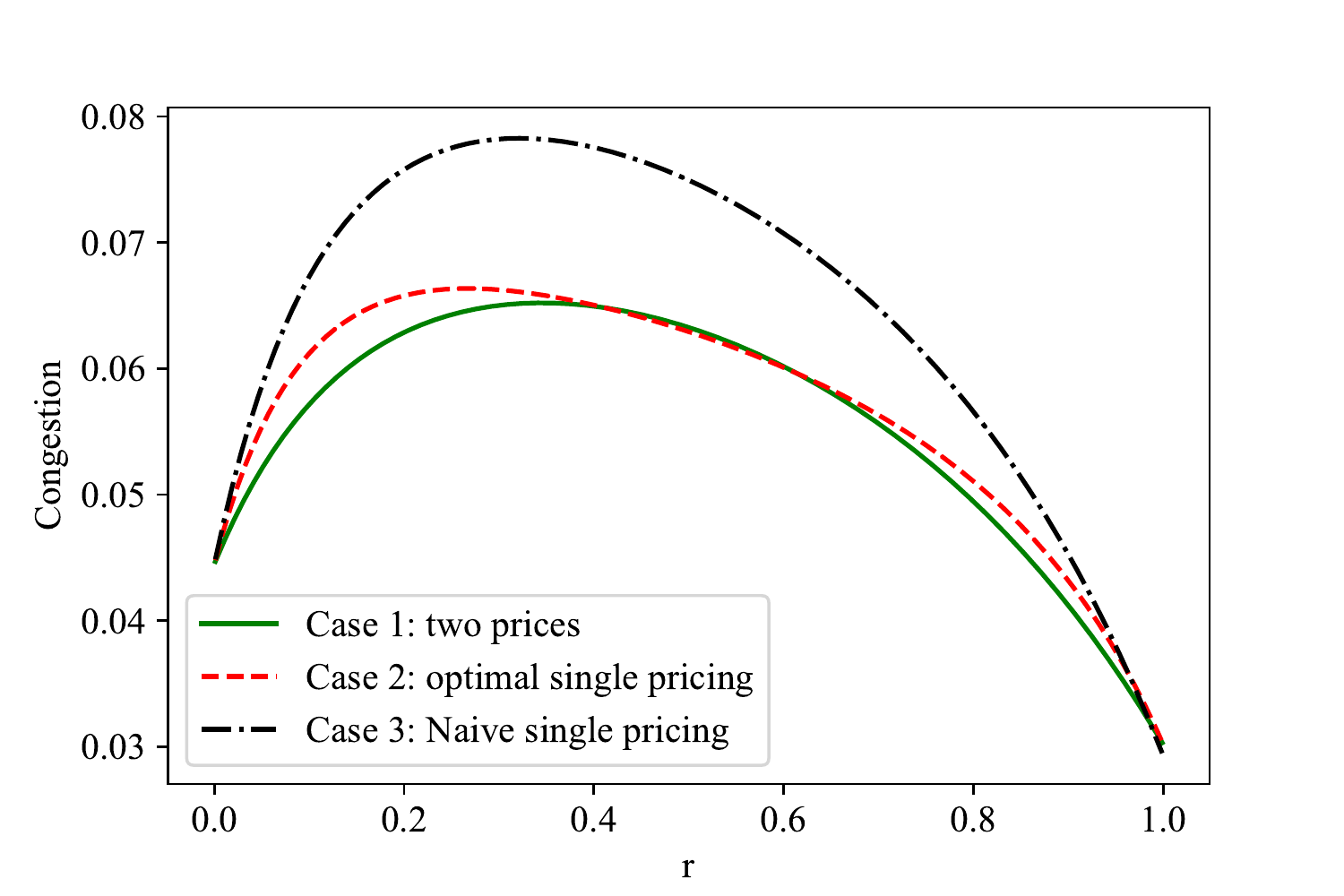}
				\caption{Total Congestion vs r} 
			\end{subfigure}
		\end{minipage}
		\hspace{1.1cm}
		\begin{minipage}[t]{4cm}
			\begin{subfigure}[t]{1\textwidth}
				\centering	
				\includegraphics[scale=0.35]{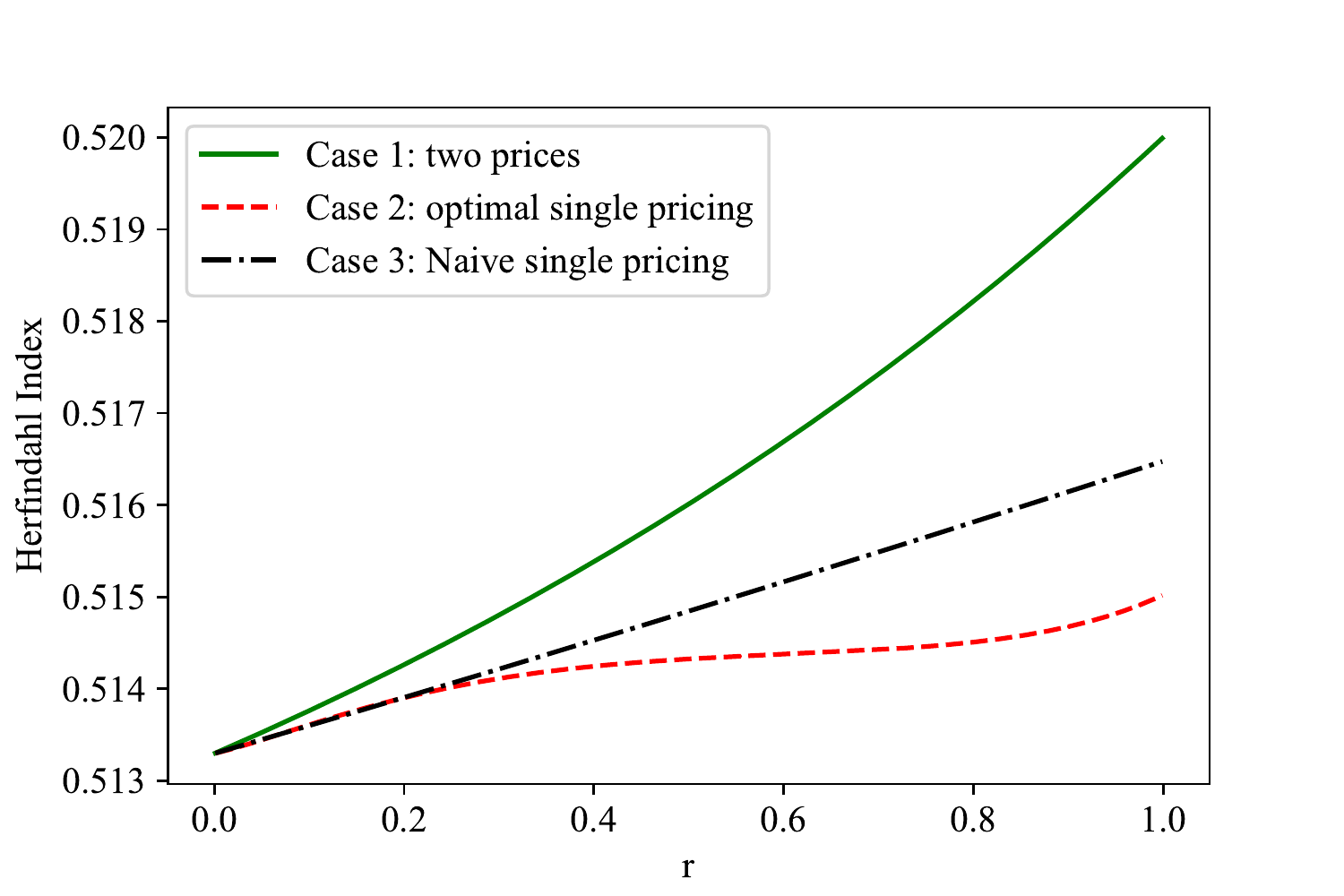}
				\caption{Herfindahl index ICE market vs r} 
			\end{subfigure}
		\end{minipage}
		\hspace{1.1cm}
		\begin{minipage}[t]{4cm}
			\begin{subfigure}[t]{1\textwidth}
				\centering	
				\includegraphics[scale=0.35]{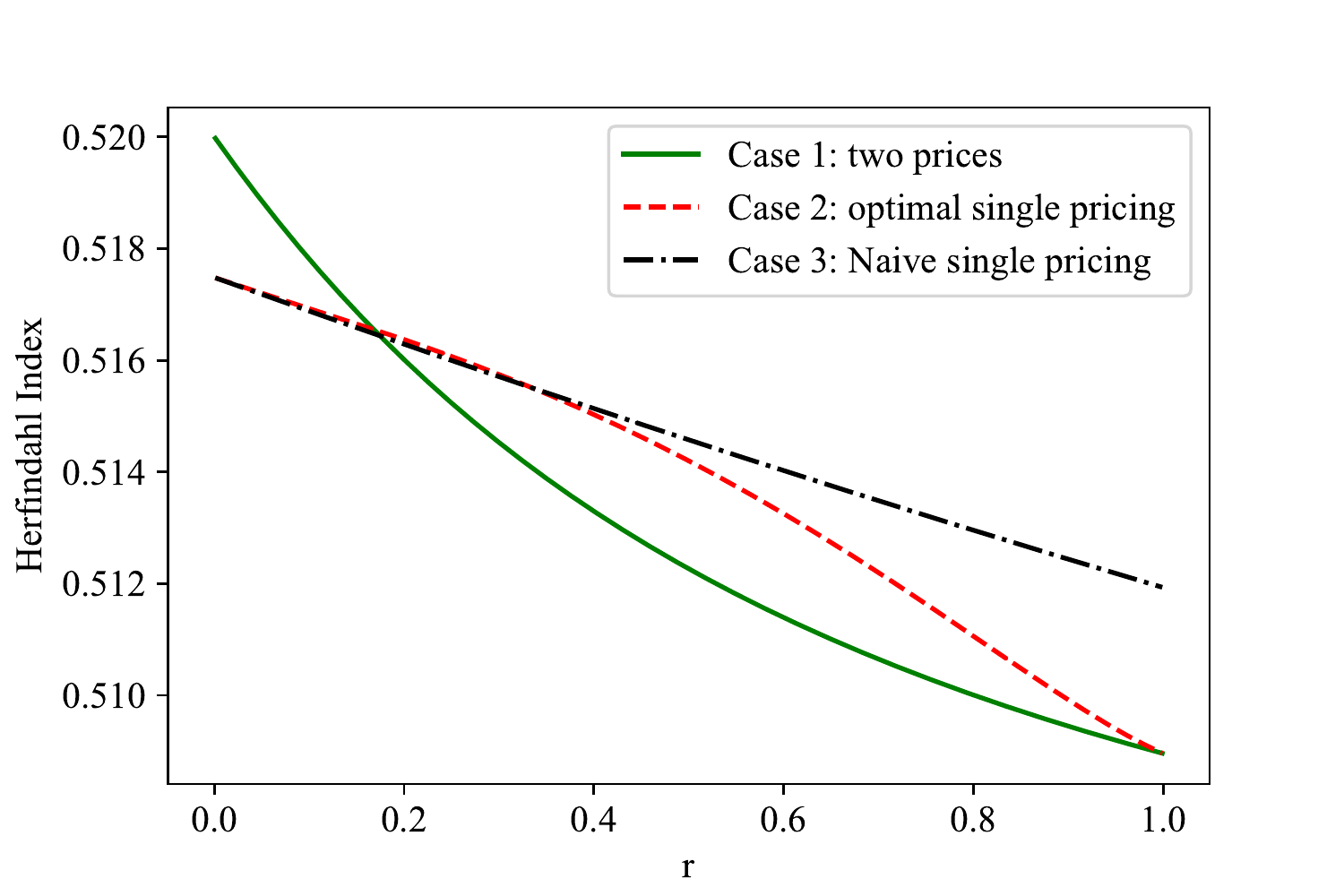} 
				\caption{Herfindahl index EV market vs r} 
			\end{subfigure}
		\end{minipage}
		
		\caption{Plots for $\alpha = 2$}
		
	\end{figure*}
	
	\section{Competitive Market: Numerical Case Study} \label{sec:competitivenumericalcasestudy}
	
	Next we present a numerical case study to see how differing levels of government mandate impact the market. We first define some values of interest.
	
	The first is the average price consumers of each class pay in equilibrium. For given prices $a_i$ the average price is defined as:
	\begin{equation*}
	a_{ave} = \sum_{i=1}^{2}\frac{q_{i}a_i}{q_{e1} + q_{e2}}.
	\end{equation*}	
	\noindent We are also interested in how competitive each market is. One way to measure this is the Herfindahl Index, a widely used measure of market concentration \cite{herfindahl1950concentration}. Given the quantity serviced $q_i$ at each firm $i$, this is defined as  
	\begin{equation*}
	H = \sum_{i=1}^{2}(\frac{q_{i}}{q_{1} + q_{2}})^2,
	\end{equation*}
	\noindent where larger values of $H$ indicate a less competitive market. 
	
	\subsection{``Naive" mandate fulfillment}
	In our first set of numerical results, we look at all three pricing approaches in the ``Naive" mandate fulfillment case. For all plots we set $W_d = 1$, $\beta = 1$, $W_e = 1.25$, $\alpha = 5$, $\epsilon = 1$, $p = 0$, $t=0$, and $\delta = 0.6$. In this case we have a large ICE market and a relatively small EV market, although the EV users have a larger utility for parking. We assume one of the garages is larger than the other (controlling 60\% of the parking spots). We then use the same values except set $\alpha = 2$. This can be viewed as the EV market expanding (the slope of demand decreases) while all other market conditions stay the same.
	
	We plot the average EV price, total firm profits, total welfare, total congestion (sum of congestion at all firms) and the Herfindahl index of both markets as a function of the government mandate ($r$). The plots for $\alpha = 5$ are shown in Figure 1 and the plots for $\alpha = 2$ are in Figure 2.  The values are plotted for all three pricing cases that were described above. We note some interesting features. First, the average EV price can be lower for the two price case than in any of the single price cases if the EV mandate is high enough when $\alpha = 5$ (the smaller EV market). However, when the EV market is larger ($\alpha = 2$), the average EV price in the two price case is never lower than in the naive single price case but can still be slightly lower than the optimal single price case when $r$ is large. Also, for $\alpha = 5$ the optimal single price can be lower than the naive price when $r$ is large but not for $\alpha = 2$. Intuitively, when the EV market is small and  a large number of spots are allocated to EV's the parking owners need to lower prices. We can also see that total firm profits can be larger in the naive pricing case than in any other case when $\alpha = 5$ and the EV mandate is large enough, though this is not the case for $\alpha = 2$. The intuition here is that although individually each firm would rather set two prices, fixing the price to that of the market before EVs enter reduces the ability of firms to compete with each other, which can increase aggregate profits. 
	
	Note in both cases the optimal total welfare occurs approximately when the mandate is equal to the size of the EV market ($\frac{1}{\alpha}$). At this value, naive single pricing maximizes welfare even though it generates lower firm profits. Apparently, the increased consumer surplus from lower prices makes up for the loss of profits, yielding improved welfare. Lastly, we note that the competitiveness of the regular market is lower in the two prices case than with the naive pricing when the EV mandate is set to the size of the EV market ($\frac{1}{\alpha}$) for both $\alpha =2$ and $\alpha = 5$. This is capturing the trade-off of inducing EV parking spots: if firms start competing on EVs, they may compete less over regular drivers and thus make the market less competitive for this class of consumers.
	
	\subsection{Optimal Capacity}
	
	We now examine the case where firms choose both $N_e$ and prices. We restrict ourselves to the ``two price" case where we have shown an equilibrium exists. We will assume the intrinsic cost of constructing spots is $t$, and the cost firms pay to construct them $p$ is a function of $t$ and government subsidies. This will impact aggregate welfare.

	We set $We = 1$, $\alpha = 1$, $W_d = 0.9$, $\beta = 0.33$, $\epsilon = 1$, $t = 0.1$, while varying $\delta$ between $0.5$ to $0.9$. We examine the effect of two government policies: one where they set a slightly lower mandate $r=0.33$ but fully subsidize construction, so $p=0$. We plot the welfare in Figure 3 as a function of $\delta$ with the four possible combinations of government policy as defined in Table \ref{tab:plotcaseesoptimalquantity}.

	\begin{table}[h!] 
		\begin{center}
			\begin{tabular}{ |c|c|c| } 
				\hline
				 & No Subsidy & Subsidy \\ \hline
				 No Mandate & (a) & (b)\\ \hline
				 Mandate & (c) & (d)\\ \hline
			\end{tabular} 
			\caption{Optimal capacity results} \label{tab:plotcaseesoptimalquantity}
		\end{center}
	\end{table}

	While aggregate welfare is decreased only slightly for all values of $\delta$ when any government action is taken, we see a large decrease in ICE consumer surplus (almost a $50\%$ drop-off) in all cases of government actions, but with a large increase in EV welfare (up to an almost $400\%$ increase) for all values of $\delta$. Effectively the government is only somewhat inefficiently trading off ICE welfare for EV welfare by either forcing or inducing (through subsidies) construction of EV chargers. Having both a large subsidy and relatively large mandate does not have much more impact beyond using just one policy given: they both produce similar effects independently and together. However, subsidies lead to government welfare loss and increases both firm profits at all levels of $\delta$, while mandates lead to lower firm profits and no cost to the government. 
	
	\begin{figure*}[t!] \label{fig:optimalcapacity}
		\centering	

		\begin{minipage}[t]{3.2cm}
			\begin{subfigure}[t]{1\textwidth}
				\centering	
				\includegraphics[scale=0.28]{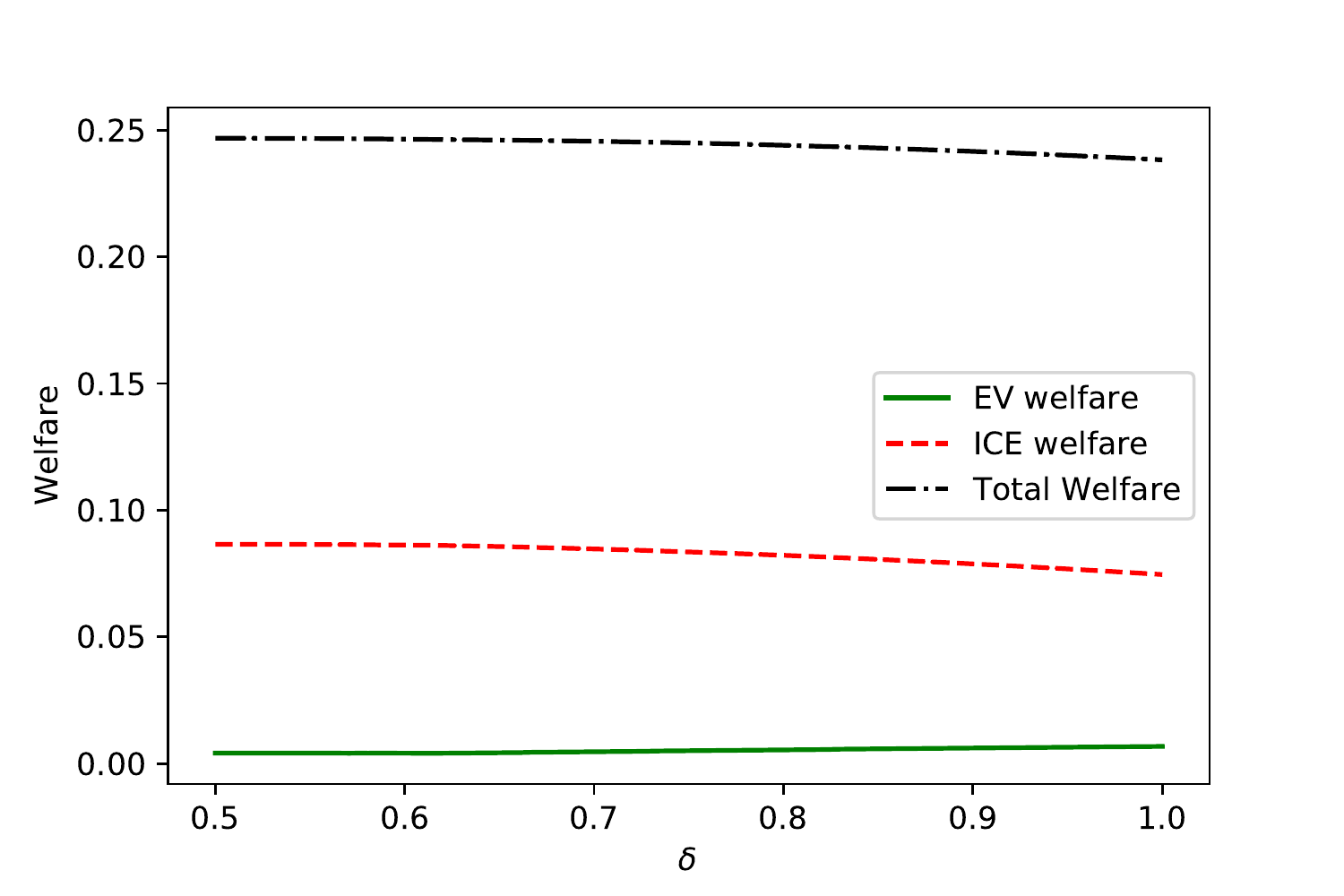} 
				\caption{Welfare, no government action} 
				\label{fig:welfarenogovt}
			\end{subfigure}
		\end{minipage}
		\hspace{0.9cm}
		\begin{minipage}[t]{3.2cm}
			\begin{subfigure}[t]{1\textwidth}
				\centering	
				\includegraphics[scale=0.28]{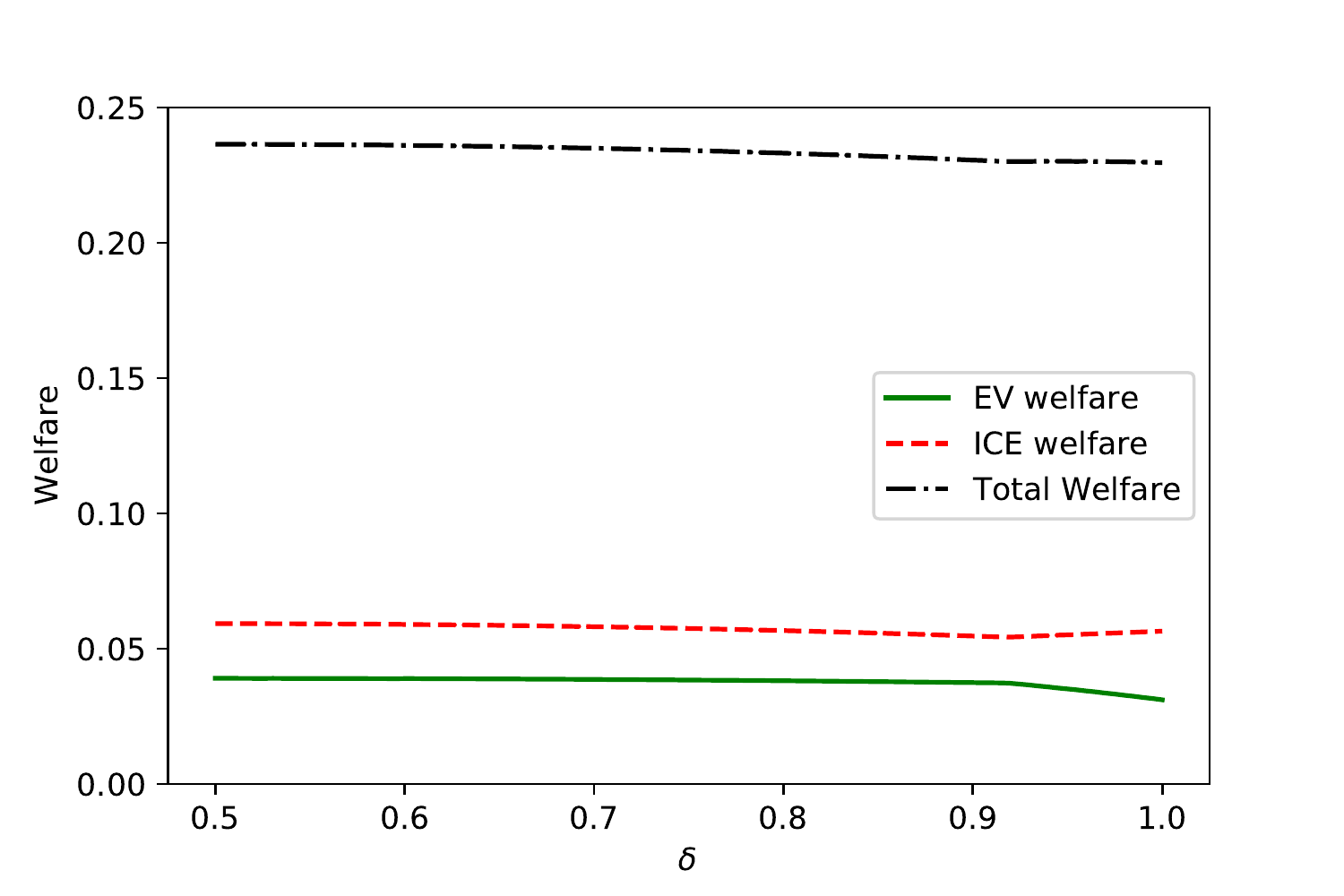} 
				\caption{Welfare, full subsidy} 
				\label{fig:welfaresubsidy}
			\end{subfigure}
		\end{minipage}
		\hspace{0.9cm}
		\begin{minipage}[t]{3.2cm}
			\begin{subfigure}[t]{1\textwidth}
				\centering	
				\includegraphics[scale=0.28]{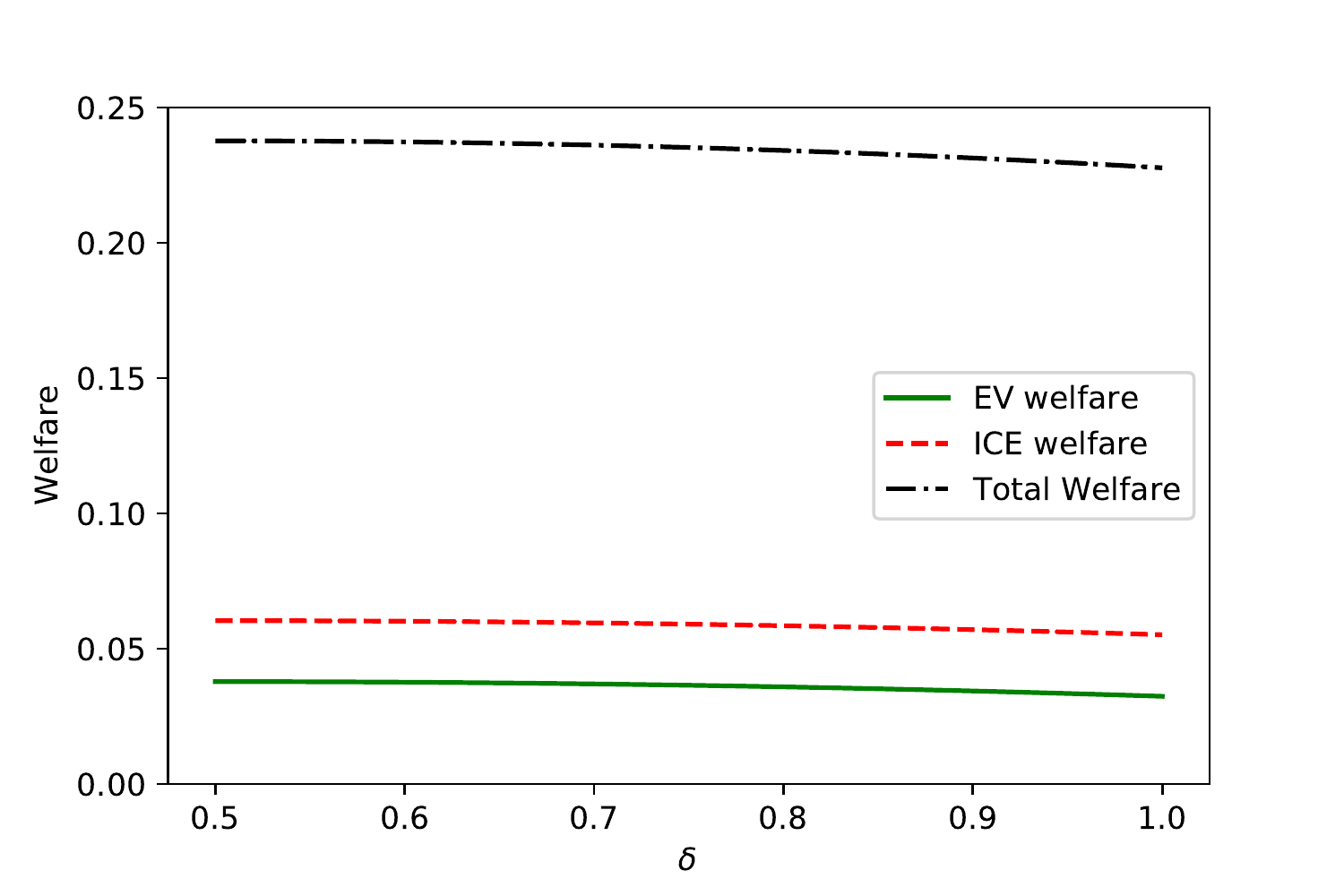} 
				\caption{Welfare, government mandate} 
				\label{fig:welfaremandate}
			\end{subfigure}
		\end{minipage}
		\hspace{0.8cm}
		\begin{minipage}[t]{3.2cm}
			\begin{subfigure}[t]{1\textwidth}
				\centering	
				\includegraphics[scale=0.28]{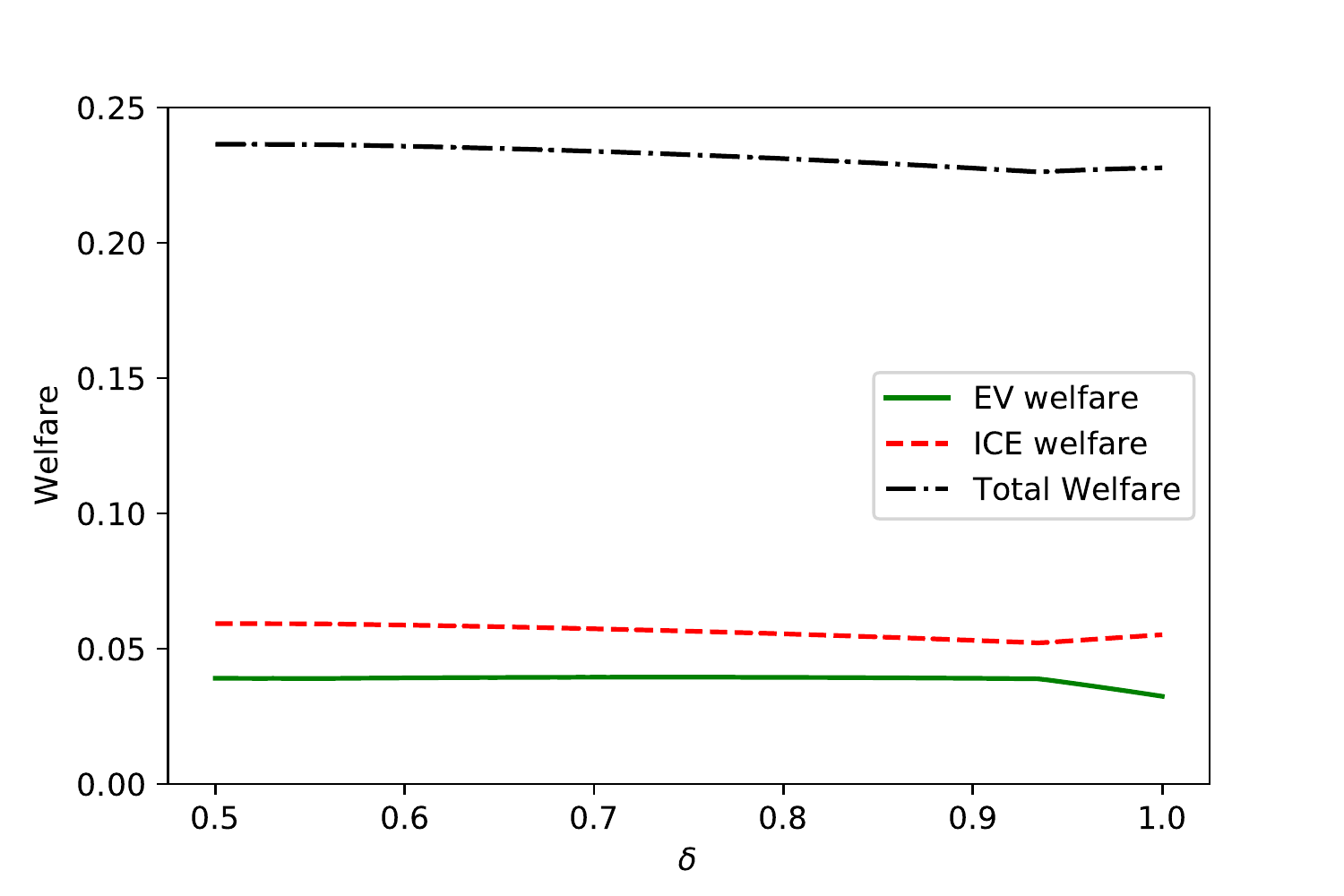} 
				\caption{Welfare, mandate and subsidy} 
				\label{fig:welfaremandateandsubsidy}
			\end{subfigure}
		\end{minipage}
	
		\caption{Plots for Optimal Capacity} 
		
	\end{figure*}

	\section{Monopolist with Stochastic Demand: Model} \label{sec:monopolistmodel}
	
	\noindent We now turn to our second model in which there is uncertainty in the demand for EV drivers. We consider a similar model as in Section II, with two changes. First, we consider a monopolist instead of a competitive market. Second, the demand function of both types of drivers will change. 
	
	\subsection{Consumers}
	
	We will assume there is a mass of regular drivers all with the same valuation for parking, $W_d$. This mass is sufficiently large (essentially unbounded) such that at any positive price, the monopolist can fill up its parking spots until the marginal value is zero. For the EV market, we similarly assume all drivers have the same valuation for parking, $W_e > W_d$, but there are only $q_{max}$ EV consumers, where $q_{max}$ depends on the state of the world.  Specifically, we define a probability distribution $\mathbf{\pi} = \{\pi_1,\pi_2,\ldots,\pi_n\}$ over a set $\mathbf{q} = \{q_1, q_2,\ldots,q_n\}$, so that $q_{max} = q_i$ with probability $\pi_i$. The monopolist sets both the capacity ($N_e$) and prices for ICE and EV drivers ($m$ and $c$, respectively) before this distribution is realized. We can therefore write the marginal utility of EV drivers (as a function of $q_{max}$) and ICE drivers as 
	
	\begin{equation*} \label{eq:utilityICEmonopolist}
	U_d(q_d,N_e,m) = W_d -  \epsilon \frac{q_d}{1 - N_e}- m,
	\end{equation*}
	\begin{align*} \label{eq:utilityEVmonopolist}
	U_e(q_e,N_e,c,q_{max}) &= W_e - \epsilon \frac{q_e}{N_e} - c. \\
	\text{s.t.} \ \ \ & q_e \leq q_{max}
	\end{align*}
	
	\subsection{Monopolist}
	
	We assume there exists a monopolist who already owns a mass of parking spots suitable for ICE drivers. Like in the previous model, we normalize this number of spots to be 1. This monopolist cannot construct new parking spots, but is able to convert a portion of their spots to include an EV charger at cost $p$. This means choosing a number of EV spots $ 0 \leq N_e \leq 1$, leaving $1 - N_e$ spots for ICE drivers.  The monopolist then chooses a price $c$ for EV drivers and $m$ for ICE drivers (i.e., we are only examining the  ``Two Price'' case as described in the competitive model). They do this before realizing what state of the world they are in (i.e., before the size of the EV market is realized).
	
	Assuming $m \leq W_d$ and $c \leq W_e$, the demand for ICE drivers and EV drivers (as a function of the realization $q_{max}$), respectively, is 
	\begin{equation*}
	D_d(N_e,m) = (1 - N_e)\frac{W_d - m}{\epsilon},
	\end{equation*}
	\begin{equation} \label{eq:EVdemandstochastic}
	D_e(N_e,c,q_{max}) = \min \left\{ N_e \frac{W_e - c}{\epsilon},q_{max} \right\}.
	\end{equation}
	The monopolist is risk neutral and wants to maximize their expected profit from the ICE market and EV market, which will we define, respectively, as
	
	\begin{equation} \label{eq:ICErevenuemonopolist}
	\Pi^d(m,N_e) =  mD_d(N_e,m),
	\end{equation}
	\begin{equation} \label{eq:EVrevenuemonopolist}
	E[\Pi^e(c,N_e)] = \sum_{i = 1}^{i = n} c\pi_i  D_{e}(N_e,c,q_i).
	\end{equation}
	
	Combining \eqref{eq:ICErevenuemonopolist} and \eqref{eq:EVrevenuemonopolist} as well as the cost of constructing spots, the overall profit maximization problem of the monopolist is:
	\begin{equation} \label{eq:monopolistoptimization}
	\begin{aligned}
	& \underset{N_e,m,c}{\text{maximize}}
	& &  \Pi_d(m,N_e) + E[\Pi_e(c,N_e)] - pN_{e} \\
	& \text{subject to}
	& & 0 \leq N_e \leq 1.
	\end{aligned}
	\end{equation}
	Similar to the competitive model, we analyze this problem by viewing the monopolist's decision in two stages: first choosing $N_e$ and then choosing prices $m$ and $c$. 
	
	\section{Monopolist with Stochastic Demand: Pricing} \label{sec:monopolistpricing}
	
	In this section, we analyze the prices the monopolist chooses as a function of $N_e$. We note that once the proportion of EV spots is chosen, the decisions to price EV and ICE drivers are independent as they cause no congestion to each other and the cost of the spots is sunk, and so we examine each separately. 
	
	\subsection{ICE driver Pricing} 
	
	We can find the optimal ICE price by solving
	
	\begin{equation*} 
	\begin{aligned}
	& \underset{m}{\text{maximize}}
	& & \Pi^d(m) = m(1-N_e)\frac{W_d - m}{\epsilon}.  \\
	\end{aligned}
	\end{equation*}
	
	This problem is concave in $m$ and so we can find the solution by examining the first order optimality conditions which gives 
	
	\begin{equation} \label{eq:optmonopolistICEprice}
	m^{*}(N_e) = \frac{W_d}{2}.
	\end{equation}
	
	We can plug \eqref{eq:optmonopolistICEprice} back into the optimization problem to get an expression for ICE revenue as a function of $N_e$  
	
	\begin{equation} \label{eq:ICErevmonopolist}
	\Pi^d(N_e) = \frac{(1 - N_e)W_d^2}{4\epsilon}.
	\end{equation}
	
	\noindent We can see this is simply a linear function in $N_e$, which increases in the ICE driver's utility ($W_d$) and decreases in $\epsilon$. 
	
	\subsection{EV driver Pricing} 
	
	Note that fixing $N_e$, we can break down the EV quantities serviced in each realization into three broad cases based on the EV price ($c$) the monopolist sets. We will define each case by what quantity of EV drivers $q_e$ they service for a given $q_{max} \ \in \mathbf{q}$. 
	
	The first two cases are uniquely defined by a target realization $q_t \ \in \  \mathbf{q}$. Case 1 occurs when the monopolist sets a price such that $q_{e} = q_{max}$ in realizations where $q_{max} \leq q_t$ while $q_{e} = q_t$ when $q_{max} > q_t$. In other words the monopolist sets an EV price such that they exactly service the target quantity when there is a large enough demand for it, and fully service all realizations smaller than the target. Case 2 occurs when the monopolist sets a price such that $q_{e} = q_{max}$ when $q_{max} \leq q_t$ and $q_{e} = q_{r}$ for some $q_t < q_{r} < q_{t+1}$. This means the monopolist serves exactly $q_{max}$ for any realization where  $q_{max} \leq q_t$, but serves some amount between the target realization and the next highest realization in all states of the world such that $q_{max} > q_t$.
	
	Finally the monopolist can set a price such that $q_e < q_1$ in all realizations of $q_{max}$, which we can call Case 3. These cases are summarized in Table \ref{tab:EVpricingcasesgeneraldist} .
	
	\begin{table}[h!] 
		\begin{center}
			\begin{tabular}{ |c|c| } 
				\hline
				Case & EV quantity serviced \\ \hline
				Case 1 & $q_{e} = q_i$ if $q_{max} \leq q_t$, else $q_{e} = q_t$  \\ \hline
				Case 2 & $q_{e} = q_i$ if $q_{max} \leq q_t$, else $q_{e} = q_r$ ($ q_t < q_r < q_{t+1}$) \\ \hline
				Case 3 & $q_{e} < q_1 \ \forall q_{max} $ \\ \hline
			\end{tabular} 
			\caption{Monopolist quantity strategies.} \label{tab:EVpricingcasesgeneraldist}
		\end{center}
	\end{table}
	For a given number of possible realization $n$ the possible cases can be ordered as follows (which is increasing in the size of the maximum number of EVs served over all realizations). \\ 
	
	[Case 3, Case 1 ($t=1$), Case 2 ($t=1$), Case 1 ($t=2$), ... , Case 2 ($t=n-1$), Case 1 ($t=n$)]. \\
	
	This allows us to describe any feasible set of quantities the monopolist serves with $n$ target quantities of Case 1 ($t = [1,2,...,n]$, i.e., the monopolist can seek to at most service any given realization) and $n - 1$ target quantities for Case 2 ($t = [1,2,...,n-1]$ as the monopolist cannot service more than the largest realization). Given the threshold structure of EV demand in \eqref{eq:EVdemandstochastic}, note  that for any realization where $q_{e} \leq q_{max}$, the monopolist will serve the same $q_{e}$ in any realization with a larger quantity of EV drivers. 
	
	We now analyze the pricing in each of these cases (with a given target realization $q_t$ where applicable) for a fixed $N_e$. \\
	
	\subsubsection*{Case 1: $q_{e} = q_i$ if $q_{max} \leq q_t$, else $q_{e} = q_t$} 
	
	In this case, for a given target quantity $q_t$, the monopolist sets the price such that they get exactly $q_t$ in any realization where the number of arrivals is greater than or equal to $q_t$. By setting \eqref{eq:EVdemandstochastic} equal to $q_t$ and solving for $c$, we find the price is
	
	\begin{equation} \label{eq:pricecase1}
	c_1^{*}(N_e,t) = W_e - \epsilon\frac{q_t}{N_e}.
	\end{equation}
	
	\subsubsection*{Case 2: $q_{e} = q_i$ if $q_{max} \leq q_t$, else $q_{e} = q_r$ ($ q_t < q_r < q_{t+1}$)} 
	
	In this case the monopolist wants to set the optimal price which is low enough to get a quantity $>q_t$ while large enough to get some amount $<q_{k+1}$. This can be expressed as the following optimization problem 
	
	\begin{equation} \label{eq:case2twostageoptimization}
	\begin{aligned}
	& \underset{c}{\text{max}}
	& & \Pi^{e}_3(c) = c(\sum_{i = 1}^{i = t}\pi_i q_i + \sum_{j = t+1}^{j = n}\pi_j N_e\frac{W_e - c}{\epsilon})  \\
	& \text{subject to}
	& & W_e - \epsilon \frac{q_t}{N_e} < c < W_e - \epsilon \frac{q_{t+1}}{N_e}.
	\end{aligned}
	\end{equation}
	
	Again we can make the constraint set closed by relaxing it to be $W_e - \epsilon \frac{q_t}{N_e} \leq c \leq W_e - \epsilon \frac{q_{k+1}}{N_e}$. If any constraint is tight, then this case is not optimal for the given setting. If the constraints are not tight, we can find the optimal price by examining the first order optimality conditions , which gives us
	
	\begin{equation} \label{eq:pricecase2}
	c_2^{*}(N_e,t) = \frac{W_e}{2} + \frac{\epsilon \sum_{i=1}^{t}\pi_iq_i}{2\sum_{j=t+1}^{n}\pi_j N_e}.
	\end{equation}
	
	This price can be viewed as the price if there was no constraint on the size of the EV market ($\frac{W_e}{2}$) (see Case 3)  which is decreasing in the target $t$. \\
	
	\subsubsection*{Case 3: $q_{e} < q_1 \ \forall q_{max} $} 
	
	In this case, the monopolist sets the optimal price as a function of $N_e$ such that they get $q_e < q_1$ in any realization. For this case to be feasible, the monopolist has to set a high enough price such that $D_e(N_e,c,q_i) < q_1$. Hence the pricing scheme in this case is equivalent to the following optimization problem
	
	\begin{equation} \label{eq:case3optimization}
	\begin{aligned}
	& \underset{c}{\text{maximize}}
	& & \Pi^{e}_1(c) = cN_e\frac{W_e - c}{\epsilon} \\
	& \text{subject to}
	& & c > W_e - \epsilon \frac{q_1}{N_e}.
	\end{aligned}
	\end{equation}
	
	Again we can relax the constraint to be $c \geq W_e - \epsilon \frac{q_1}{N_e}$, and note that if this constraint is ever tight we are not in this case. If this occurs we will say that this case is no longer feasible, as case 2 is strictly superior. When the constraint is not tight, we find the optimal price is
	\begin{equation} \label{eq:optmonopolistEVpricecasea}
	c_3^{*}(N_e,t) = \frac{W_e}{2}.
	\end{equation}
	This is the same pricing scheme as for the ICE drivers (a constant price independent of $N_e$). This makes sense as assuming ex ante the monopolist sets a price such that they serves $q_e < q_1$ EV drivers is the same as being unconstrained by the size of the EV market. 
		
	\section{Monopolist with Stochastic Demand: Capacity} \label{sec:stochasticarrivalcapacity}	
		
	Next, we turn to the Monopolist's optimal choice of $N_e$.  Using the results of the previous section, we can now formulate this as 
	\begin{align} \label{eq:firststagemonopolist}
	\underset{N_e}{\text{maximize}} & \ \ m^{*}(N_e)D_d(N_e,m^{*})\  + \\
	& c^{*}(N_e,t)\sum_{r = 1}^{r = n} \pi_r  D_{e}(N_e,c^{*}(N_e,t),q_r) - pN_{e} \nonumber \\ 
	\text{subject to} & \ \ 0 \leq N_e \leq 1. \nonumber
	\end{align}
	At each value of $N_e$, the optimal EV price $c^{*}(N_e,k)$ is defined by whichever pricing case and target quantity pair leads to the highest profit (at a given $N_e$, any given case could lead to this). The optimal ICE price $m^{*}(N_e)$ is defined by \eqref{eq:optmonopolistICEprice} regardless of case (and in fact is also independent of $N_e$). The overall optimal profit of the monopolist is defined by the best choice of EV spots $N_e^*$ that solves this problem. This choice also solves \eqref{eq:monopolistoptimization} (where the optimal $m$ and $c$ are determined by this choice $N_e^*$). We now examine an important property of $N_e^{*}$.
	 	
	\begin{theorem} \label{thm:stochasticdemand}
		Assume $\frac{W_e}{2} > \epsilon q_1$ and $\frac{4}{\epsilon}(W_e^2 - W_d^2)  > p$. Any optimal pricing strategy for the monopolist is to set a price $c$ such that for some $t$, $1 \leq t \leq n$, the monopolist services exactly $q_t$ for every realization of the distribution such that $q_e \geq q_t$, and exactly $q_i$ for any realization where the quantity of EV drivers is $q_i < q_t$.
	\end{theorem}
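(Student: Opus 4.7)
The plan is to reparametrize the monopolist's pricing decision by a ``target quantity'' $Q \ge 0$: if we write $c = W_e - \epsilon Q/N_e$, then the monopolist serves exactly $Q$ whenever $q_{max} \ge Q$ and serves $q_{max}$ otherwise. The expected EV revenue becomes $R(Q,N_e) = (W_e - \epsilon Q/N_e)\bar{q}(Q)$ with $\bar{q}(Q) = \sum_i \pi_i \min(Q,q_i)$. In this language, Case 1 at target $t$ corresponds to $Q = q_t$, Case 2 at target $t$ to $Q \in (q_t,q_{t+1})$, and Case 3 to $Q < q_1$, so the theorem is equivalent to the claim that the globally optimal target $Q^*$ is one of the kinks $q_1,\ldots,q_n$.

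For fixed $N_e$, $R(\cdot,N_e)$ is piecewise quadratic: on $[q_t,q_{t+1}]$ it equals $(W_e - \epsilon Q/N_e)(A_t + B_t Q)$ with $A_t = \sum_{i\le t}\pi_i q_i$ and $B_t = \sum_{j>t}\pi_j$, whose interior critical point is $\widehat{Q}_t = N_e W_e/(2\epsilon) - A_t/(2B_t)$. At each kink $q_t$ the right derivative drops by $(W_e - \epsilon q_t/N_e)\pi_t \ge 0$, so $R$ is globally concave in $Q$ and the pointwise optimum $Q^*(N_e)$ is well-defined. The positive cross-partial $\partial^2 R/\partial Q\,\partial N_e > 0$ makes $Q^*(N_e)$ non-decreasing in $N_e$, tracing a staircase that alternates between ``plateaus'' (intervals of $N_e$ on which $Q^* = q_t$, i.e., Case 1) and ``ramps'' (intervals on which $Q^*$ lies in some $(q_t,q_{t+1})$, i.e., Case 2), with Case 3 at the low end.

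The technical heart is to show that the envelope profit $\Pi^*(N_e) = R(Q^*(N_e),N_e) + \Pi^d(N_e) - pN_e$ is strictly convex in $N_e$ on every Case 2 ramp and linear on the Case 3 interval. Substituting the Case 2 optimal price \eqref{eq:pricecase2} into the objective and simplifying gives, on the $t$-th ramp,
\[
\Pi^*(N_e) = \frac{A_t W_e}{2} + \frac{B_t W_e^2 N_e}{4\epsilon} + \frac{\epsilon A_t^2}{4 B_t N_e} + \frac{(1 - N_e)W_d^2}{4\epsilon} - pN_e,
\]
so $d^2\Pi^*/dN_e^2 = \epsilon A_t^2/(2 B_t N_e^3) > 0$. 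On Case 3, $Q^* = N_e W_e/(2\epsilon)$ and $\Pi^*$ is linear in $N_e$ with slope $(W_e^2 - W_d^2)/(4\epsilon) - p$; the assumption $W_e/2 > \epsilon q_1$ guarantees the Case 3 interval $[0,2\epsilon q_1/W_e]$ is nontrivial, and the hypothesis on $p$ forces this slope to point toward the right endpoint (in particular, $N_e^* > 0$, ruling out the degenerate no-EV-spots solution).

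Assembling the pieces: a strictly convex function on a closed interval is maximized at an endpoint, and the endpoints of every Case 2 ramp are adjacent Case 1 kinks; the linear Case 3 piece is dominated by its right endpoint, which itself realizes Case 1 at $t=1$. Hence any global maximizer $N_e^*$ must satisfy $Q^*(N_e^*) = q_t$ for some $1 \le t \le n$, proving the theorem. The main obstacle is the algebraic identity that exhibits the convexity of $\Pi^*$ on the Case 2 ramps; once it is in hand, the dominance of Case 1 over Cases 2 and 3 is immediate from the endpoint principle for convex/linear functions on a closed interval.
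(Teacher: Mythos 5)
Your proof is correct and rests on the same two computations as the paper's own argument: strict convexity in $N_e$ of the profit under Case 2 pricing (your second derivative $\epsilon A_t^2/(2B_tN_e^3)$ is exactly the paper's expression with $A_t=\sum_{i\le t}\pi_iq_i$, $B_t=\sum_{j>t}\pi_j$) and linearity of the Case 3 profit in $N_e$, with Case 1 configurations sitting at the boundaries of both regions. What you add is a cleaner packaging: the reparametrization $c = W_e - \epsilon Q/N_e$, the piecewise-quadratic concavity of $R(\cdot,N_e)$ in the target $Q$, and the supermodularity argument making $Q^*(N_e)$ a nondecreasing staircase of plateaus and ramps. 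That extra structure is not strictly needed for the endpoint argument, but it makes rigorous the step the paper only asserts --- namely that when a Case 2 constraint becomes tight the optimum coincides with a Case 1 point, and that the ramp/plateau/Case-3 pieces exhaust the domain --- so it is a genuine improvement in the assembly. One discrepancy worth flagging: your Case 3 slope $(W_e^2-W_d^2)/(4\epsilon) - p$ is the correct derivative of $\Pi_3(N_e) = \frac{N_eW_e^2}{4\epsilon} + \frac{(1-N_e)W_d^2}{4\epsilon} - pN_e$, whereas the paper writes $\frac{4}{\epsilon}(W_e^2-W_d^2) - p$; since the theorem's hypothesis mirrors the paper's constant, the hypothesis as literally stated does not guarantee that your (correct) slope is positive. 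The assumption should presumably read $\frac{1}{4\epsilon}(W_e^2 - W_d^2) > p$, and with that correction your argument goes through completely.
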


	 In other words, this shows that under the assumptions in this theorem the optimal price is always in Case 1.  Hence, to determine the optimal $N_e$, we only need to consider the $n$ prices corresponding to Case 1. We can prove this by examining the structure of \eqref{eq:firststagemonopolist} given how $c^{*}(N_e)$ is defined in each case. The proof for this is found in the appendix.

	\begin{figure*}[t!] \label{fig:stochasticdemand}
		\centering	
		
		\begin{minipage}[t]{4cm}
			\begin{subfigure}[t]{1\textwidth}
				\centering	
				\includegraphics[scale=0.35]{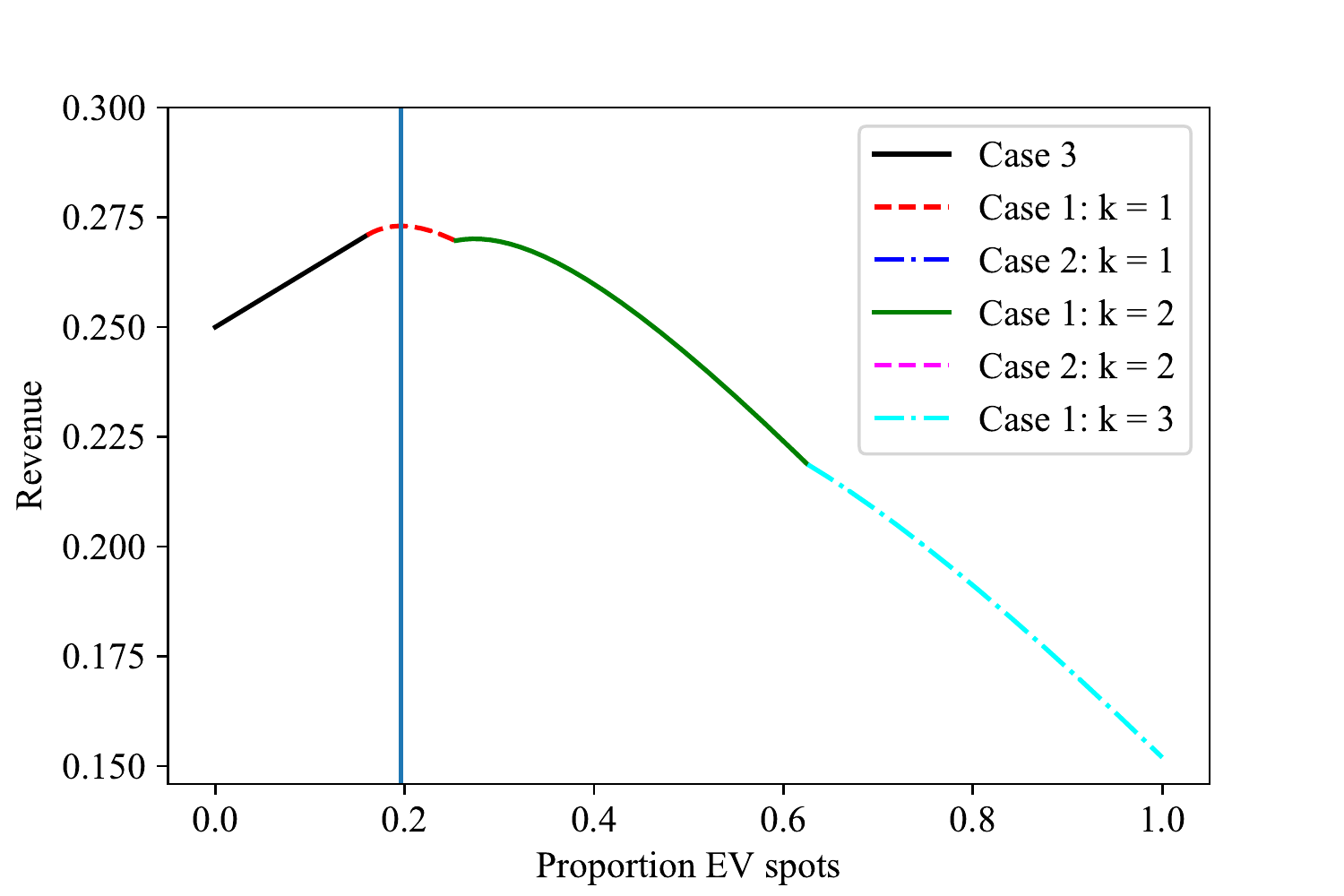} 
				\caption{$\mathbf{q} = [0.1,0.15,0.3]$, $\boldsymbol{\pi} = [0.4,0.33,0.27]$, $p=0.01$} 
				\label{fig:stochastic_plot_a}
			\end{subfigure}
		\end{minipage}
		\hspace{1.1cm}
		\begin{minipage}[t]{4cm}
			\begin{subfigure}[t]{1\textwidth}
				\centering	
				\includegraphics[scale=0.35]{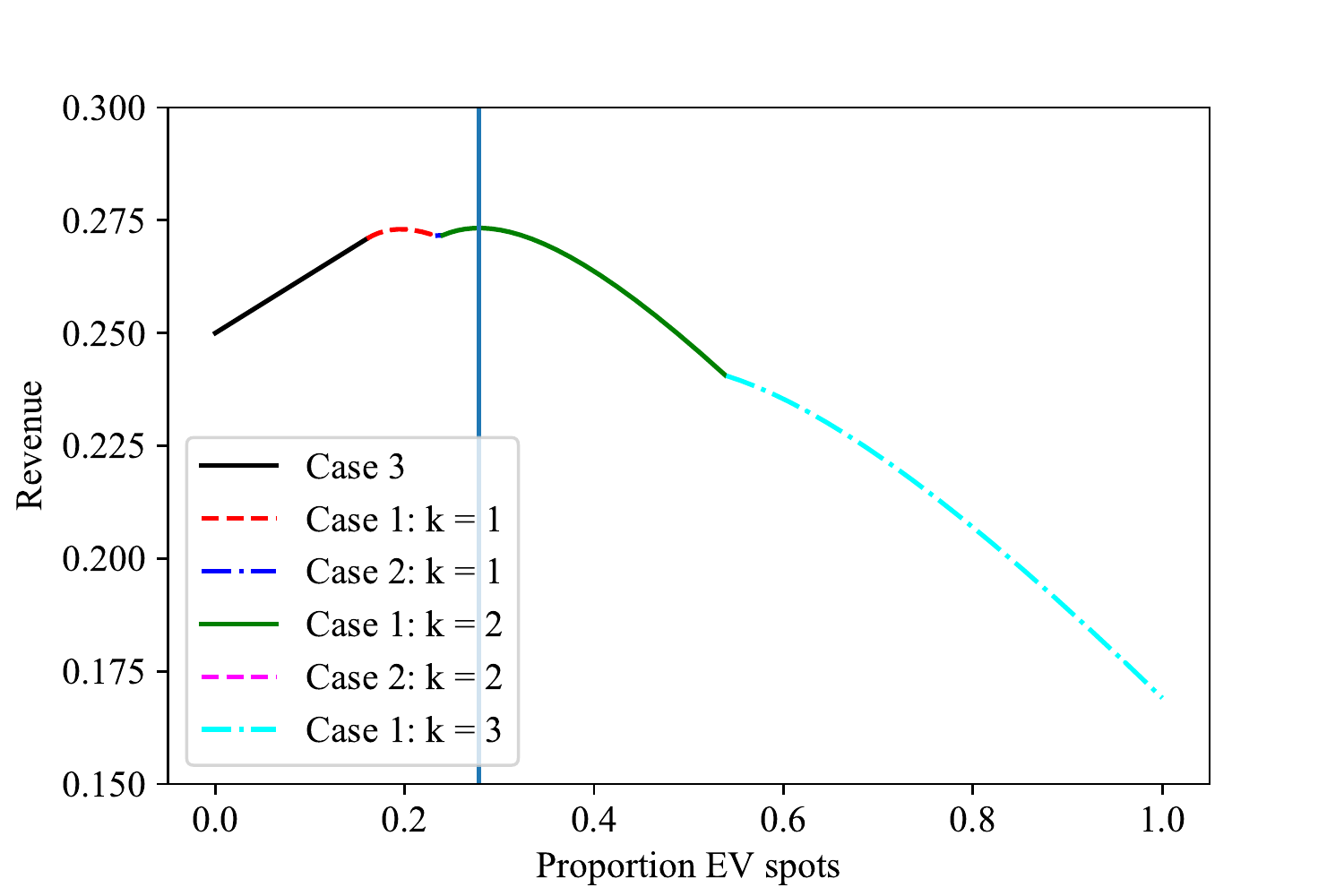} 
				\caption{$\mathbf{q} = [0.1,0.15,0.3]$, $\boldsymbol{\pi} = [0.31,0.33,0.36]$, $p=0.01$} 
				\label{fig:stochastic_plot_b}
			\end{subfigure}
		\end{minipage}
		\hspace{1.1cm}
		\begin{minipage}[t]{4cm}
			\begin{subfigure}[t]{1\textwidth}
				\centering	
				\includegraphics[scale=0.35]{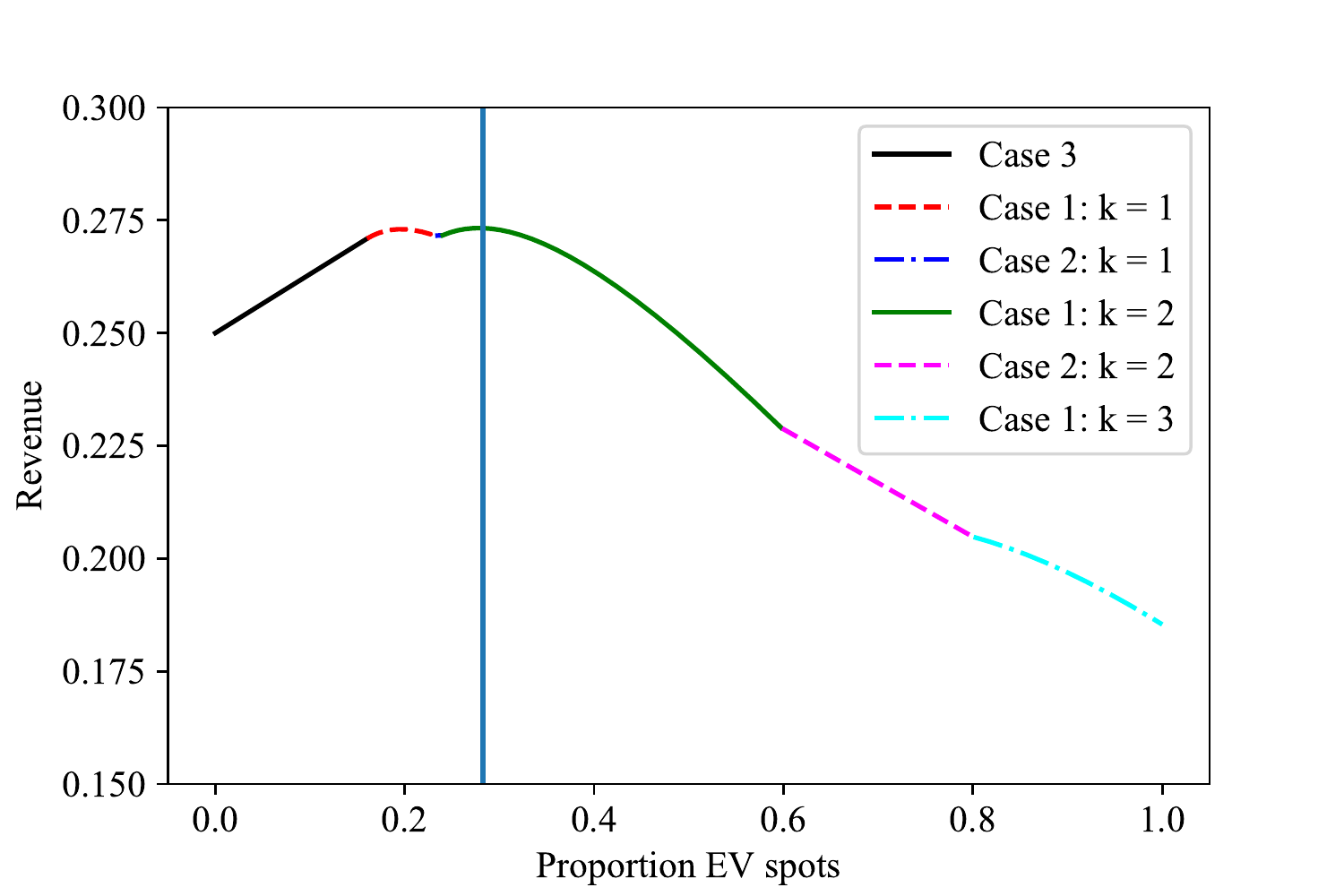} 
				\caption{$\mathbf{q} = [0.1,0.15,0.5]$, $\boldsymbol{\pi} = [0.31,0.33,0.36]$, $p=0.01$} 
				\label{fig:stochastic_plot_c}
			\end{subfigure}
		\end{minipage}
		
		\begin{minipage}[t]{4cm}
			\begin{subfigure}[t]{1\textwidth}
				\centering	
				\includegraphics[scale=0.35]{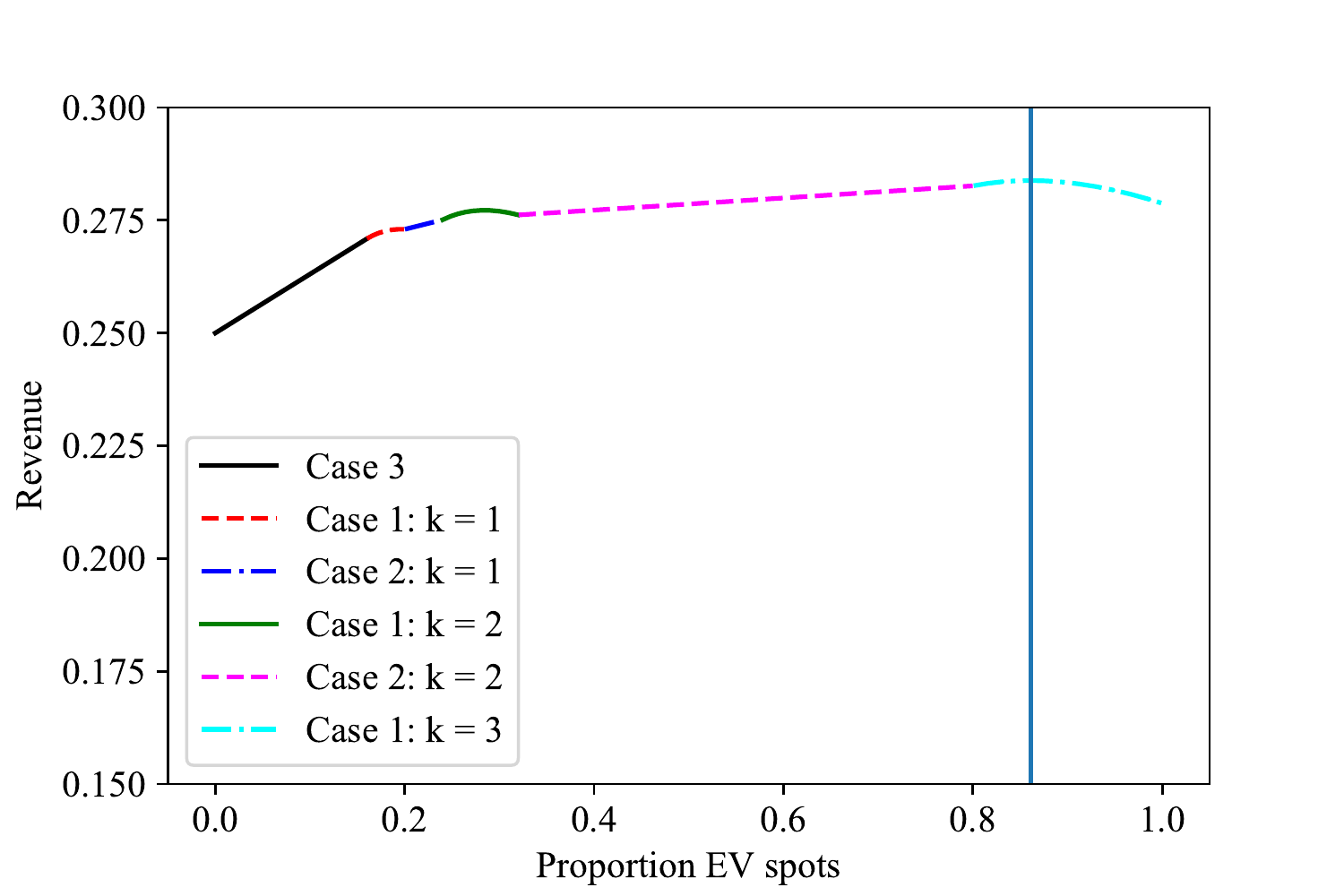} 
				\caption{$\mathbf{q} = [0.1,0.15,0.5]$, $\boldsymbol{\pi} = [0.2,0.1,0.7]$, $p=0.01$} 
				\label{fig:stochastic_plot_d}
			\end{subfigure}
		\end{minipage}
		\hspace{1.1cm}
		\begin{minipage}[t]{4cm}
			\begin{subfigure}[t]{1\textwidth}
				\centering	
				\includegraphics[scale=0.35]{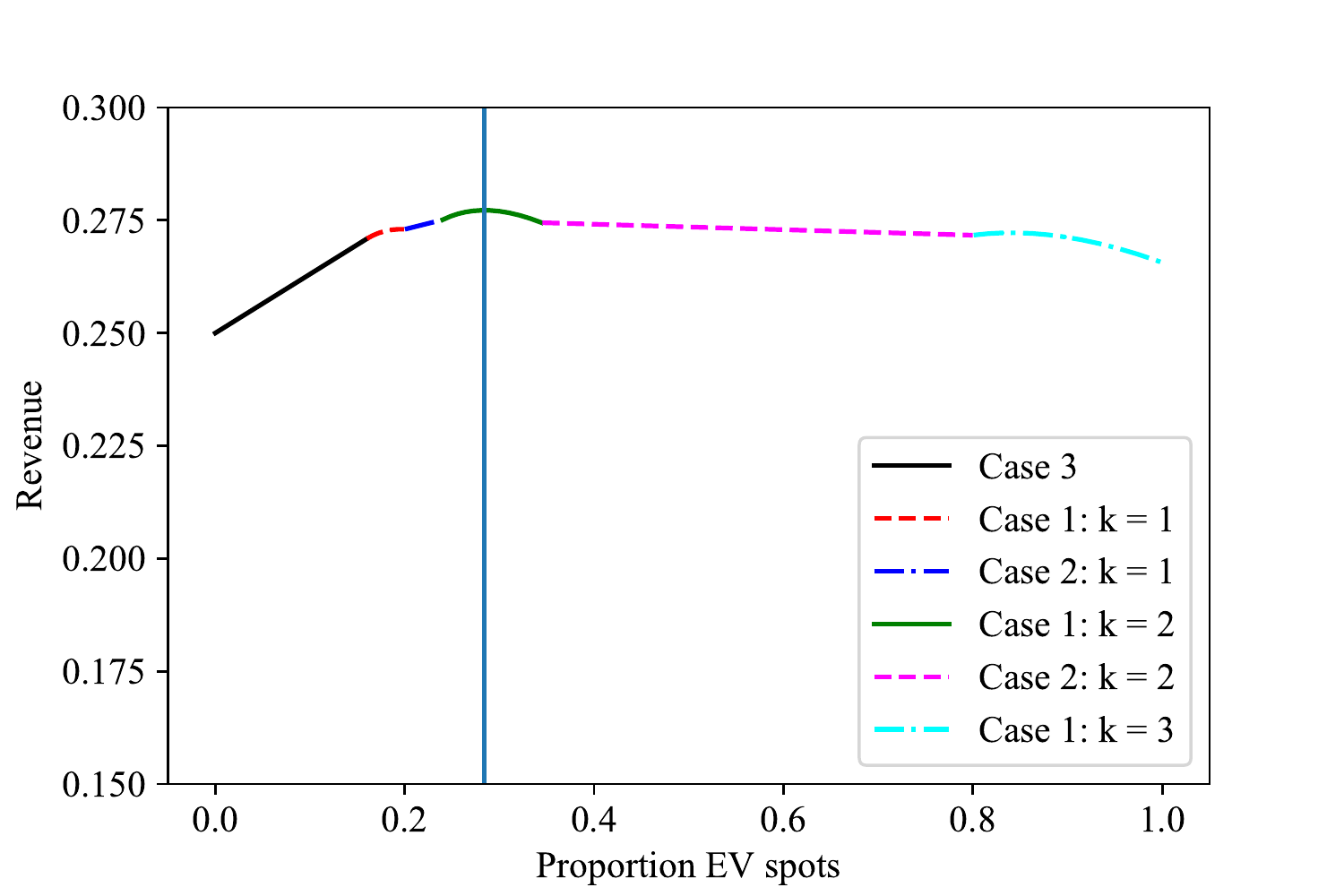}
				\caption{$\mathbf{q} = [0.1,0.15,0.5]$, $\boldsymbol{\pi} = [0.2,0.15,0.65]$, $p=0.01$} 
				\label{fig:stochastic_plot_e}
			\end{subfigure}
		\end{minipage}
		\hspace{1.1cm}
		\begin{minipage}[t]{4cm}
			\begin{subfigure}[t]{1\textwidth}
				\centering	
				\includegraphics[scale=0.35]{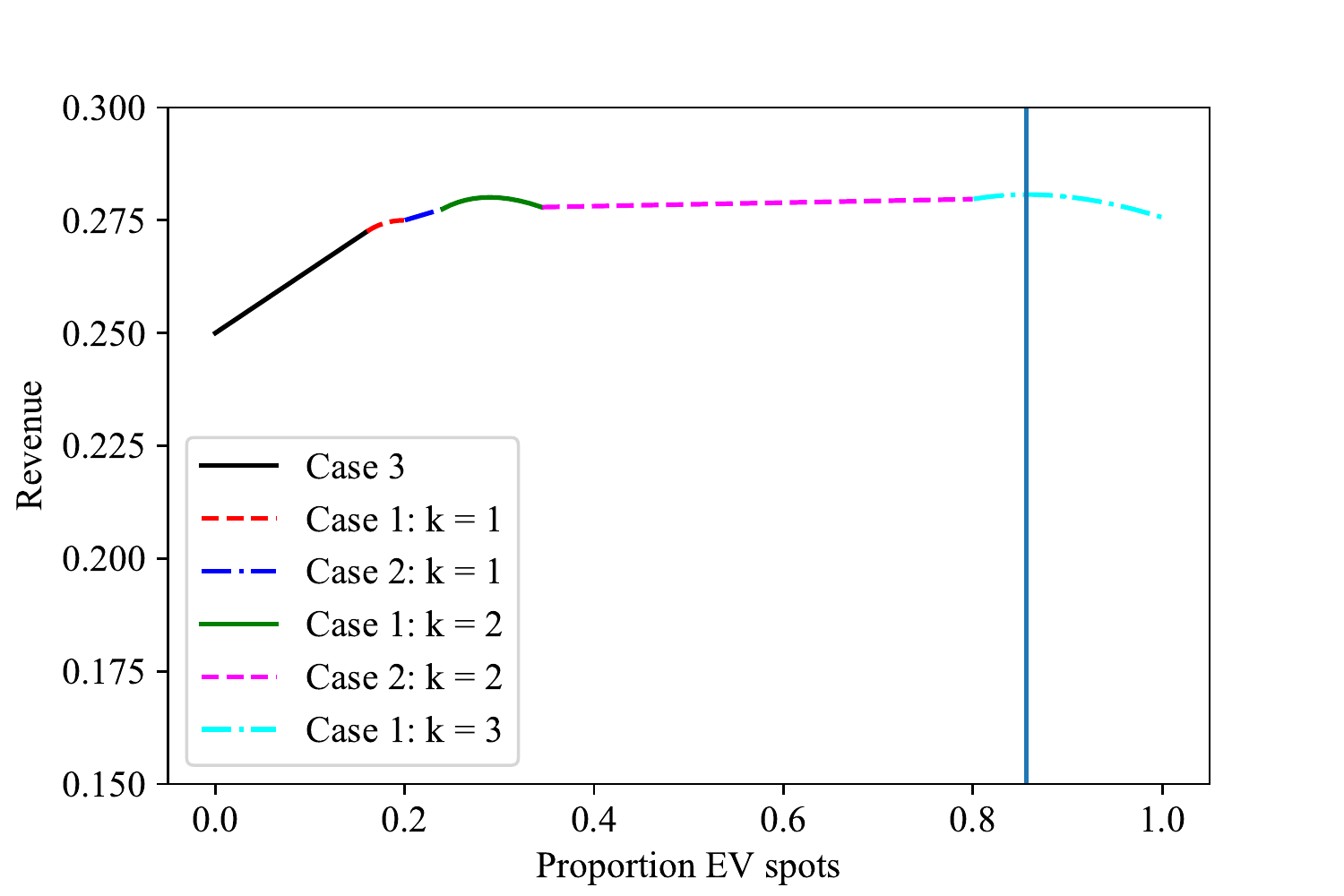}
				\caption{$\mathbf{q} = [0.1,0.15,0.5]$, $\boldsymbol{\pi} = [0.2,0.15,0.65]$, $p=0$} 
				\label{fig:stochastic_plot_f}
			\end{subfigure}
		\end{minipage}
		
		\caption{Plots for stochastic demand} 
		
	\end{figure*}
	
	\section{Monopolist with Stochastic Demand: Case Study} \label{sec:monopolistnumerical}
	
	We now examine how changes in the monopolist's belief about the EV market can affect the capacity they can add. In Figure 4 we plot the optimal revenue function for the monopolist as a function of $N_e$ with $W_e = 1.25$, $W_d = 1$, $\epsilon = 1$, $n=3$, $t = 0.01$ and $p=0.01$. We divide the pricing into the cases defined by different k values as described before. The overall optimal $N_e^*$ in each plot is marked by a vertical blue line. In each plot, we vary the distribution of EV arrivals as noted below. As indicated in Theorem \ref{thm:stochasticdemand}, in each plot the optimal number of spots is in a Case 1 pricing regime, but the target quantity $k$ is different and noted.
	
	In plot (a), we set $\mathbf{q} = [0.1,0.15,0.3]$ and $\boldsymbol{\pi} = [0.4,0.33,0.27]$ and find $N_e^* = 0.196$ with $k = 1$. In plot (b), we set $\mathbf{q} = [0.1,0.15,0.3]$ and $\boldsymbol{\pi} = [0.31,0.33,0.36]$, and find that the optimal exists at $N_e^* = 0.279$ with $k = 2$. This is the same as plot (a) but with a small change in the arrival distribution (a $0.06$ decrease in $\pi_1$ and a corresponding increase in $\pi_3$). This small change leads to an almost $50\%$ increase in the number of EV spots constructed. In plot (c), we set $\mathbf{q} = [0.1,0.15,0.5]$ and $\boldsymbol{\pi} = [0.31,0.33,0.36]$ and find $N_e^* = 0.278$ for k=2 again. This is the same as plot (b) but with $q_3$ increasing by $0.2$. We can see that even though there is now a possible realization where many more EV drivers enter the market than in plot (b), the monopolist optimally does not plan any differently as $N_e^*$ is independent of $q_3$. 
	
	In plot (d), we set $\mathbf{q} = [0.1,0.15,0.5]$ and $\boldsymbol{\pi} = [0.2,0.1,0.7]$ and find $N_e^* = 0.860$ with k=3. A change in the distribution of EV arrivals (not the quantities in each state) leads to a large change in the number of spots constructed. In plot (e), we set $\mathbf{q} = [0.1,0.15,0.5]$ and $\boldsymbol{\pi} = [0.2,0.15,0.65]$ and find $N_e^* = 0.284$ with $k=2$. This is a small change from plot d (the same quantity distribution but with $\pi_2$ $0.05$ higher and $pi_3$ correspondingly smaller). A small change in the distribution of arrivals in this case leads to more than a 3 times larger number of EV spots constructed. Lastly, in plot (f) we have the government subsidize the construction of EV spots ($p = 0.0$) with no change in the distribution of arrivals and find $N_e^* =0.857$ with k=3. We see that government subsidies can have a large impact on equilibrium behavior as it can shift which case is optimal for the monopolist. It also affects the number of spots within cases as we can see from \eqref{eq:optimalNecase1stochastic}. This makes it a tool that always increases the number of EV spots created, unlike the distributions of arrivals (which as we saw may not impact the number of spots constructed).
		
	\section{Conclusions} \label{sec:conclusion}
	
	We proposed a model where EV drivers and ICE drivers utilize the same fixed quantity of a resource (parking) and EV drivers have a higher valuation due to the need to charge. However, to serve EV drivers, operators require additional investment leading to these spots only being available to ECs. We have used this model to examine a case where there is competition or another where there is uncertainty in future EV demand. The right set of policies (subsidies and mandates) and the correct levels are dependent on how competitive the market is when firms are more cognizant of the ability to add capacity and price these two concurrent markets (EV and ICE drivers). With a single monopolist, we showed that errors in forecasting the EV demand can have significant impact on the charging capacity installed, suggesting that techniques for improving such forecasts are worth considering in future work. This also suggests that risk neutrality may not be the best way to optimize over this uncertainty, as a different objective function could lead to a capacity that targets for multiple sizes of the EV market instead of the one target as in our work.
	
	One key element in the EV market we did not model is the cost of electricity (we assumed charging costs were low enough to be considered 0); adding such costs as well as options to help lower these (like investing in renewable generation) is another possible future direction. 
	
	%%%%%%%%%%%%%%%%%%%%%%%%%%%%%%%%%%%%%%%%%%%%%%%%%%%%%%%%%%%%%%%%%%%%%%%%%%%%%%%%
	
		\section*{Appendix}
	
	\subsection{Justification of Wardrop equilibrium expressions}
	
	In \eqref{eq:wardropEV} and \eqref{eq:wardropICE}, we gave the Wardrop equilibrium quantities by assuming that both firms always served each type of user.  In particular we ignored the case where one firms prices for a given class was large enough compared to the other firms so that it did not serve any customers.  In this section we show that in terms of determining the Wardrop equilibrium, there is no loss in doing this.  In particular, we show that given an equilibrium as in Section \ref{sec:competitivemodel}, a firm would never benefit by unilaterally lowering its price to the point where our given expressions no longer hold. 
	
	We show this by contradiction. We analyze the EV market, but the same results hold for the ICE one. Without loss of generality, we will fix the price of firm 2 and look at firm 1's incentives to deviate. We note that the expression for quantity in \eqref{eq:wardropEV} we have implies
	\begin{equation*}
	c_1 + \epsilon \frac{q_{e1}}{Ne_{e1}} = c_2  + \epsilon \frac{q_{e2}}{Ne_{e2}}.
	\end{equation*}
	Assume that firm 1 deviates from this point and sets a price that leads to no consumers optimally parking at firm 2. This implies two conditions hold	
	\begin{equation} \label{eq:wardropcondition1}
	c_1  + \epsilon \frac{q_{e1}}{Ne_{e1}}  = W_e - \alpha W_e q_{e1}
	\end{equation}
	\begin{equation} \label{eq:wardropcondition2}
	c_2  > W_e - \alpha W_e q_{e1}.
	\end{equation}
	Rearranging equation \eqref{eq:wardropcondition1}, the revenue function for firm 1 is
	\begin{equation*}
	\max_{c_1} c_1(\frac{W_e - c_1}{\alpha W_e + \frac{\epsilon}{N_{e1}}}).
	\end{equation*}
	This is a concave function in $c_1$. Evaluating the first-order optimality conditions, the optimal price that firm 1 chooses in this case is	
	\begin{equation*}
	c_1^* = \frac{W_e}{2}.
	\end{equation*}
	In order for condition \eqref{eq:wardropcondition2} to hold given this price, it must be that:
	\begin{equation*}
	c_2 > W_e - \alpha W_e q_{e1} = c_1 + \epsilon \frac{q_{e1}}{N_{e1}} = \frac{W_e}{2} + \epsilon \frac{q_{e1}}{N_{e1}}.
	\end{equation*}	
	As $\frac{W_e}{2}$ is the monopolist price, this can never occur. Therefore, firm 1 will never best respond to firm 2 such that firm 2 would serve no consumers (unless firm 2's capacity was 0). 
	
	\subsection{Proof of Proposition \ref{prop:equilibriumexistence}}
	
	\subsubsection{Quantity case 1: Exactly follow mandate}
	
	In this case the first stage of the game is simply $Ne_i = rN_i$. Thus to prove this result we only need to verify that the second stage equilibrium is unique. To do this we first evaluate the second derivative of the revenue function from serving ICE drivers as follows (as quantity is assumed to be fixed in the first stage): \\
	\begin{gather*}
	\Pi^{ICE}_i = m_iq_{di} = \frac{(W_d+\beta W_d N_{dj}^{\epsilon}m_j)m_i}{\beta W_d(1+ \frac{N_{dj}}{N_{di}}) + \frac{1}{N_{di}^{\epsilon}}}  \\
	- \frac{(W_d \beta N_{dj}^{\epsilon} + 1)m_i^2}{\beta W_d(1+ \frac{N_{dj}}{N_{di}}) + \frac{1}{N_{di}^{\epsilon}}} \\
	\frac{\delta \Pi^{ICE}_i }{\delta m_i} = \frac{(W_d+\beta W_d N_{dj}^{\epsilon}m_j) - 2(W_d \beta N_{dj}^{\epsilon} + 1)m_i}{\beta W_d(1+ \frac{N_{dj}}{N_{di}}) + \frac{1}{N_{di}^{\epsilon}}} \\
	\frac{\delta^2 \Pi^{ICE}_i }{\delta m_i^2} = \frac{ - 2(W_d \beta N_{dj}^{\epsilon} + 1)}{\beta W_d(1+ \frac{N_{dj}}{N_{di}}) + \frac{1}{N_{di}^{\epsilon}}} \leq 0 .
	\end{gather*}
	
	Given the second derivative is always negative with respect to the firm's strategy we have the existence of a Nash Equilibria. Uniqueness then follows from comparing this to the off-diagonal terms in the Hessian matrix for $\Pi_i$, i.e.,  \\
	\begin{equation*}
	\frac{\delta^2 \Pi_i }{\delta m_i m_j} = \frac{W_d \beta N_{dj}^{\epsilon} }{\beta W_d(1+ \frac{N_{dj}}{N_{di}}) + \frac{1}{N_{di}^{\epsilon}}} \leq |\frac{\delta^2 \Pi_i }{\delta m_i^2}|. 
	\end{equation*}
	The same can be done with the revenue function of servicing EV drivers. This implies dominance solvability of the overall revenue function (as the Hessian of the payoff functions are negative semi-definite), which from Moulin (1984) proves there exists a unique Nash equilibrium \cite{moulin1984dominance} \cite{rosen1965existence}.  \\
	
	\subsubsection{Quantity case 2: Mandate forms lower bound} 
	
	For length purposes, the proof for this can be found in the version of the paper located at \cite{dropbox}.
	
	\subsection{Proof of Proposition \ref{prop:maxquantitydriver}}
	
	We assume $N_{e1} + N_{e2} = K$, and so can define $N_{e1} = K - N_{e2}$. We will prove this by showing the function $\frac{\delta q_{e}(K-N_{e2},N_{e2})}{\delta N_{e2}} \leq 0$ for $N_{e2} \geq \frac{K}{2}$. 
	
	Using \ref{eq:wardropEV} and the expressions for
	$c_i^{BR}(N_{ei},c_j)$ we can express $q_ei$ as follows: 
	\begin{equation*}
	q_{e1}(K-N_{e2},N_{e2}) = \frac{W_e + \frac{\alpha W_e N_{e2}}{\epsilon}c^{BR}_2(K-N{e2},N_{e2})}{2(\alpha W_e K + \epsilon)(\frac{1}{N_{e2}})},
	\end{equation*}
	\begin{equation*}
	q_{e2(K-N_{e2},N_{e2})} = \frac{W_e + \frac{\alpha W_e (K - N_{e2})}{\epsilon}c^{BR}_1(K-N{e2},N_{e2})}{2(\alpha W_e K + \epsilon)(\frac{1}{K - N_{e2}})}.
	\end{equation*}
	Let $q_e(K-N_{e2},N_{e2}) = q_{e1}(K-N_{e2},N_{e2}) + q_{e2}(K-N_{e2},N_{e2})$ and
	\begin{multline*}
	q_e(K-N_{e2},N_{e2}) =\frac{\alpha W_e(KN_{e2}-N^2_{e2})}{\epsilon}[ \\
	c^{BR}_1(K-N{e2},N_{e2}) + c^{BR}_2(K-N{e2},N_{e2})].
	\end{multline*}
	Using the expressions for $c^{BR}_1(K-N{e2},N_{e2})$ and $c^{BR}_2(K-N{e2},N_{e2})$ we find
	\begin{multline*}
	q_e(K-N_{e2},N_{e2}) = \frac{(\frac{\alpha W_e(KN_{e2}-N^2_{e2})}{\epsilon})(\frac{3\alpha W_e^2 K}{\epsilon} + 4W_e)}{\frac{3\alpha^2 W_e^2 (KN_{e2} - N_{e2}^2)}{\epsilon^2} + \frac{4 \alpha W_e K}{\epsilon} + 4}.
	\end{multline*}
	
	Using the quotient rule, we can express the derivative with respect to $N_{e2}$ as:
	
	\begin{multline}\label{eq:derivativequantity}
	\frac{\delta q_e(K-N_{e2},N_{e2})}{\delta N_{e2}} = \frac{(\frac{\alpha W_e(K-2N_{e2})}{\epsilon})(\frac{3\alpha W_e^2 K}{\epsilon} + 4W_e)}{\frac{3\alpha^2 W_e^2 (KN_{e2} - N_{e2}^2)}{\epsilon^2} + \frac{4 \alpha W_e K}{\epsilon} + 4} \\
	+  \frac{(\frac{\alpha W_e(KN_{e2}-N^2_{e2})}{\epsilon})(\frac{3\alpha W_e^2 K}{\epsilon} + 4W_e)(\frac{3\alpha^2 W_e^2 (K - 2N_{e2})}{\epsilon^2})}{(\frac{3\alpha^2 W_e^2 (KN_{e2} - N_{e2}^2)}{\epsilon^2} + \frac{4 \alpha W_e K}{\epsilon} + 4)^2}.
	\end{multline}
	
	For any $N_{e2} \in [\frac{K}{2},K]$, $(K-2N_{e2}) \leq 0$ but  $(KN_{e2}-N_{e2}^2) \geq 0$. Therefore, both terms in the above equation are $\leq 0$ for any $N_{e2}$ between $\frac{K}{2}$ and $K$, and so the total EV quantity serviced decreases as $N_{e2}$ increases from $\frac{K}{2}$ to $K$. Therefore, the quantity is maximized when $N_{e2} = \frac{K}{2}$. We note this same proof holds for ICE pricing.
	
	\subsection{Proof of Theorem \ref{thm:stochasticdemand}}
	
	We show that the monopolist can never choose $N_e^{*}$ such that they are in Case 2 or Case 3 (for any target $q_t$),  but can do so when in Case 1. We examine each case separately to demonstrate this.
	\begin{proof} 
		\begin{proofpart} %1
			Case 1: $q_{e} = q_i$ if $q_{max} \leq q_t$, else $q_{e} = q_t$
			
			\normalfont We can note using $c_1^{*}(N_e,t)$ as defined in \eqref{eq:pricecase1} in \eqref{eq:firststagemonopolist} leads to the revenue function in Case 1 as a function of the target $k$ as
			\begin{multline*}
			\Pi_{1}(N_e,k) =(W_e - \epsilon \frac{q_t}{N_e})(\sum_{i = 1}^{i = t}\pi_i q_i + \sum_{j = t+1}^{j = n}\pi_j q_t)\\
			+ \frac{(1-N_e)W_d^2}{4\epsilon} - pN_e. 
			\end{multline*}
			\normalfont This is a concave problem in $N_e$. We find the optimal $N_e^*$ for this case given a target quantity $q_t$ is	
			\begin{equation} \label{eq:optimalNecase1stochastic}
			N_e^*(k) = \max\{\sqrt{\frac{\epsilon q_t(\sum_{i = 1}^{i = k}\pi_i q_i + \sum_{j = k+1}^{j = n}\pi_j q_t)}{\frac{W_d^2}{4\epsilon} + p}},1\}.
			\end{equation} 
			
		\end{proofpart}
		\begin{proofpart} %2
			Case 2: $q_{e} = q_i$ if $q_{max} \leq q_t$, else $q_{e} = q_r$ ($ q_t < q_r < q_{t+1}$)
			
			\normalfont Substituting $c_2^{*}(N_e)$ as defined in \eqref{eq:pricecase2} into \eqref{eq:firststagemonopolist}, we can examine the revenue as a function of $N_e$ in Case 2 for a given $k$ as
			\begin{multline*}
			\Pi_{2}(N_e,k) = (\frac{W_e}{2} + \frac{\epsilon \sum_{i=1}^{k}\pi_iq_i}{2\sum_{j=k+1}^{n}\pi_j N_e})(\sum_{i = 1}^{i = k}\pi_i q_i 
			+ \\ 
			\sum_{j = k+1}^{j = n}\pi_j N_e\frac{W_e - \frac{\epsilon \sum_{i=1}^{k}\pi_iq_i}{\sum_{j=k+1}^{n}\pi_j N_e}}{2\epsilon}) + \frac{(1-N_e)W_d^2}{4\epsilon}. 
			\end{multline*}
			
			\normalfont This problem is twice differentiable in $N_e$, and examining the second derivative we can see that
			
			\begin{equation*}
			\frac{\delta^2 \Pi_{3}(N_e,k)}{\delta N_e^2} = \frac{\epsilon(\sum_{i=1}^{k}\pi_iq_i)^2}{2\sum_{j=k+1}^{n}\pi_j N_e^3} > 0.\\
			\end{equation*}
			
			\normalfont Therefore this problem is strictly convex in $N_e$ for any $k$. This means the revenue will be maximized within this case when one side of the constraint of \eqref{eq:case2twostageoptimization} is tight, which leads to Case 2 pricing no longer being feasible for the target quantity.
		\end{proofpart}
		\begin{proofpart} %1
			Case 3: $q_{e} < q_1 \ \forall q_{max} $
			
			\normalfont We can show the monopolist never optimally prices in this case by first noting the optimal price $c_3^{*}(N_e)$ defined in \eqref{eq:optmonopolistEVpricecasea} leads to an EV demand with $q_{max} = q_1$ as a function of $N_e$ equal to
			
			\begin{equation*} 
			D_e(N_e,c_3^{*}(N_e),q_1) = \min\{N_e \frac{W_e}{2\epsilon},q_1\}.
			\end{equation*}
			
			\normalfont We have assumed $\frac{W_e}{2} > \epsilon q_1$, which is equivalent to $D_e(N_e = 1,c = \frac{W_e}{2},q_{max}=q_1)$. This implies $\exists N_e^{'} < 1$. such that $N_e^{'} \frac{W_e}{2\epsilon} = q_1$ and thus at this level of $N_e^{'}$ the monopolist is pricing in Case 1 with $q_t = q_1$.
			
			\normalfont Secondly, we can substitute \eqref{eq:optmonopolistEVpricecasea} into \eqref{eq:firststagemonopolist} to get the revenue as a function of $N_e$ when they price in Case 3
			
			\begin{equation} \label{eq:revcase3}
			\Pi_3(N_e) = \frac{N_eW_e^2}{4\epsilon} + \frac{(1 - N_e)W_d^2}{4\epsilon} - pN_e.
			\end{equation}
			
			\normalfont We can show that at any $N_e < N_e^{'}$ (where the firm is still pricing in Case 3), $\Pi_1(N_e) < \Pi_1(N_e^{'})$ by examining the first derivative of \eqref{eq:revcase3}, which is
			
			\begin{equation*}
			\frac{\delta \Pi_1(N_e)}{\delta N_e} = \frac{4}{\epsilon}(W_e^2 - W_d^2) - p > 0
			\end{equation*}
			
			\noindent \normalfont as in our assumption $\frac{4}{\epsilon}(W_e^2 - W_d^2)  > p$. Therefore, at any level $N_e$ such that the constraint in \eqref{eq:case3optimization} is satisfied the monopolist would increase profit by increasing $N_e$, and at any level where the constraint is violated they would gain more profit from pricing in Case 1. 
		\end{proofpart}
	\end{proof}

	%%%%%%%%%%%%%%%%%%%%%%%%%%%%%%%%%%%%%%%%%%%%%%%%%%%%%%%%%%%%%%%%%%%%%%%%%%%%%%%%
	
	\begin{IEEEbiography}[{\includegraphics[width=1in,height=1.25in,clip,keepaspectratio]{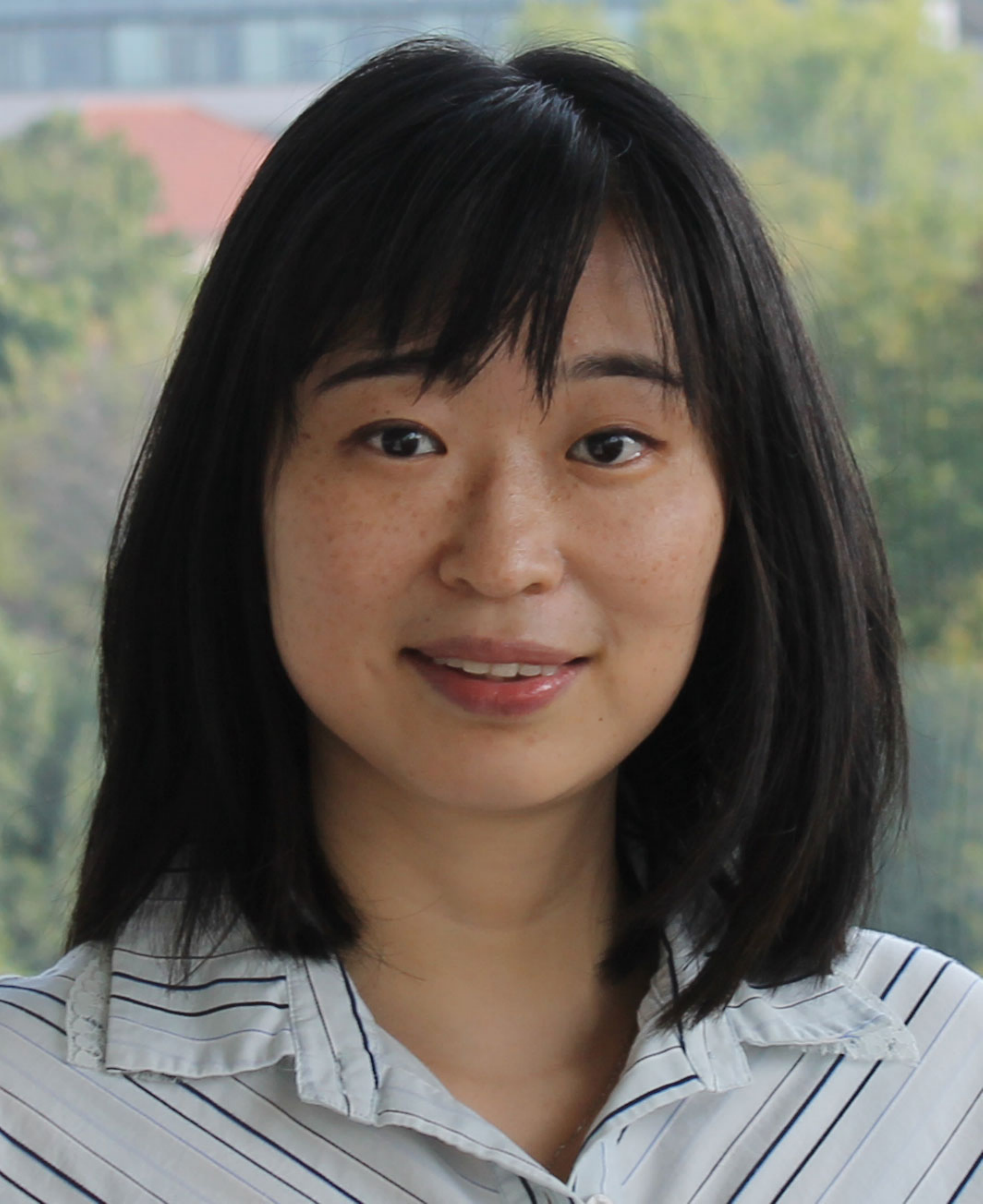}}]{Ermin Wei}
		
		received the Undergraduate triple degree in computer engineering, finance and mathematics with a minor in German from the University of Maryland, College Park, MD, USA, the M.S. degree from the Massachusetts Institute of Technology (MIT), Cambridge, MA, USA, and the Ph.D. degree in electrical engineering and computer science, advised by Prof. A. Ozdaglar, from MIT in 2014. She is currently an Assistant Professor with the Department of Electrical Engineering and Computer Science, Northwestern University, Evanston, IL, USA.  Her research interests include distributed optimization methods, convex optimization and analysis, smart grid, communication systems and energy networks, and market economic analysis. She was the recipient of many awards, including the Graduate Women of Excellence Award, second place prize in Ernst A. Guillemen Thesis Award, and Alpha Lambda Delta National Academic Honor Society Betty Jo Budson Fellowship.
	\end{IEEEbiography}
	
	\begin{IEEEbiography} [{\includegraphics[width=1in,height=1.25in,clip,keepaspectratio]{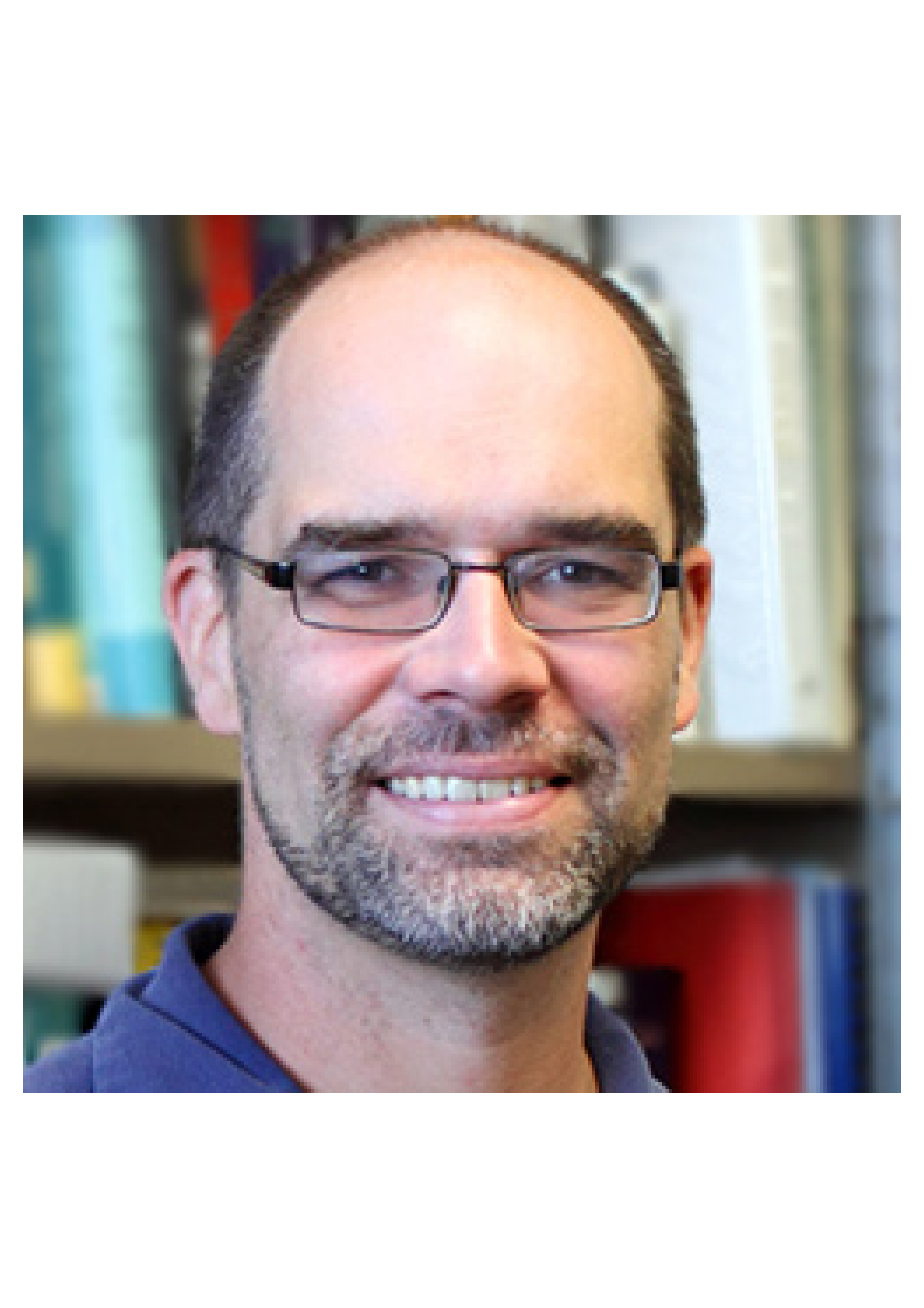}}]{Randall A. Berry} 
		
		Randall A. Berry is the Lorraine Morton Professor in the Department of Electrical Engineering and Computer Science at Northwestern University. His current research interests span topics in
		network economics, wireless communication, computer networking and information theory. Dr. Berry received the MS and PhD degrees from the Massachusetts Institute of Technology in 1996 and 2000, respectively. He is the recipient of a 2003 CAREER award from the National Science Foundation and is a fellow of the IEEE. He served as an editor for the IEEE Transactions on Wireless Communications (2006- 2009) and  the IEEE Transactions on Information Theory (2008-2012) as well as a guest editor for special issues of the IEEE Journal on Selected Areas in Communications (2017), the IEEE Journal on Selected Topics in Signal Processing (2008) and the IEEE Transactions on Information Theory (2007). Dr. Berry has also served on the program and organizing committees of numerous conferences including serving as co- chair of the 2012 IEEE Communication Theory Workshop, co-chair of 2010 IEEE ICC Wireless Networking Symposium. and TPC co-chair for the 2018 ACM MobiHoc conference.
	\end{IEEEbiography}
	
	\begin{IEEEbiography}
		[{\includegraphics[width=1in,height=1.25in,clip,keepaspectratio, angle=360]{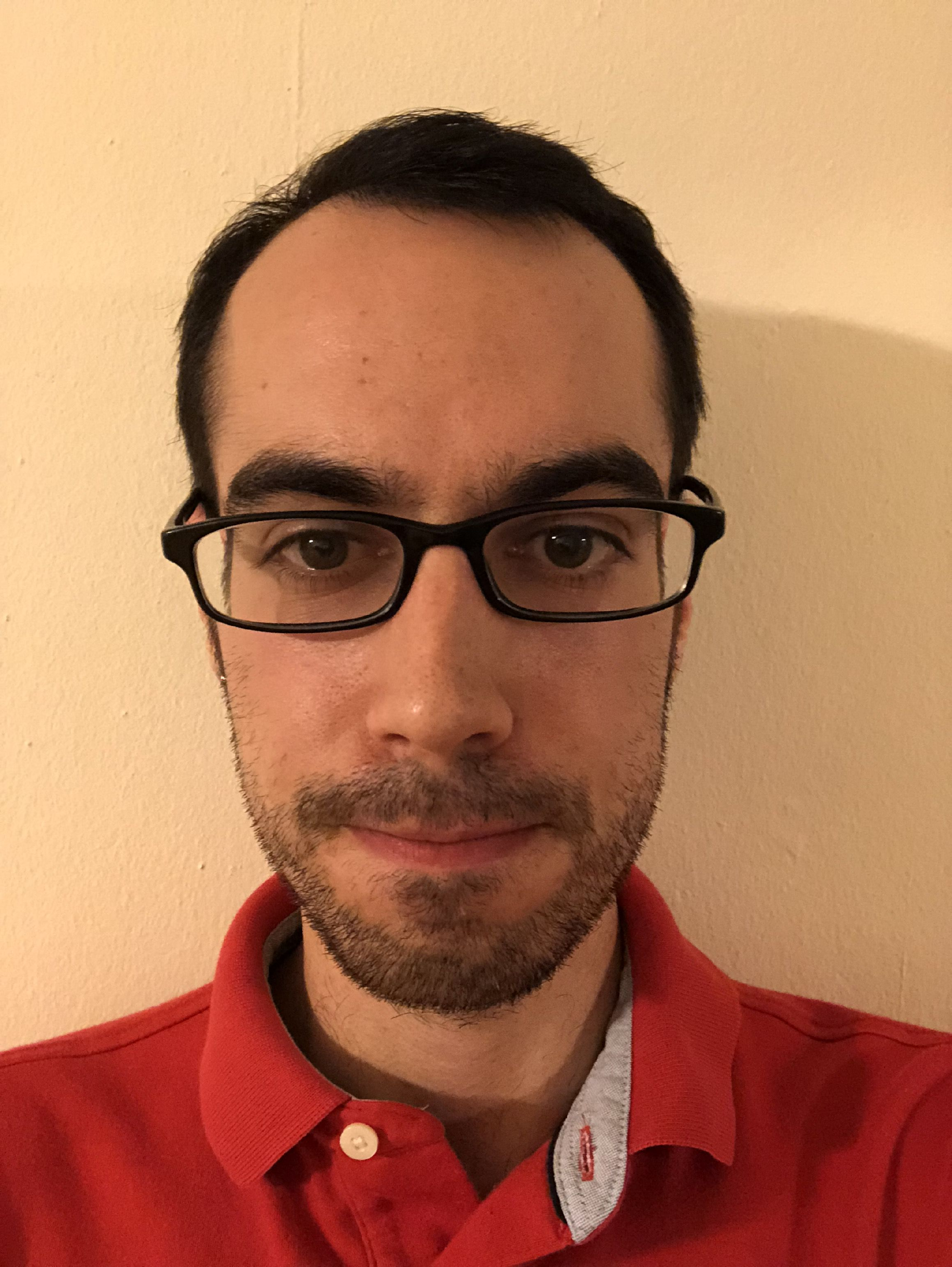}}]{Brendan Badia} 
		
		is a PhD candidate in Electrical Engineering at Northwestern. He received the BS in Electrical and Computer Engineering and  Economics at Carnegie Mellon, Pittsburgh, PA. 
	\end{IEEEbiography}
	
	%%%%%%%%%%%%%%%%%%%%%%%%%%%%%%%%%%%%%%%%%%%%%%%%%%%%%%%%%%%%%%%%%%%%%%%%%%%%%%%%

	%%%%%%%%%%%%%%%%%%%%%%%%%%%%%%%%%%%%%%%%%%%%%%%%%%%%%%%%%%%%%%%%%%%%%%%%%%%%%%%%

	\bibliographystyle{ieeetr}
	%\bibliography{Economics_Modern_Networks_Manuscript} 

\end{document}